\theoremstyle{plain}
\newtheorem{lemma}{Lemma}
\newtheorem{definition}{Definition}
\definecolor{salmon}{RGB}{250,128,114}
\newcommand{\set}[1]{\left\{ #1 \right\}}
\newcommand{\paren}[1]{\left( #1 \right)}
\newcommand{\eqd}{\overset{\mathcal{D}}{=}}
\newcommand{\CL}[1]{\mathcal{G}\!\paren{#1}}
\newcommand{\prob}[1]{\mathbb{P}\!\paren{#1}}
\newcommand{\NPcomp}{$\mathcal{NP}$-\textrm{complete}}
\newcommand{\expect}[1]{\mathbb{E}\!\left[#1\right]}
\DeclareMathOperator{\VAR}{Var}
\newcommand{\Var}[1]{\VAR\!\paren{#1}}
\newcommand{\abs}[1]{\left| #1 \right|}
\newcommand{\R}{\mathbb{R}}
\DeclareMathOperator{\relhaus}{\mathcal{RH}} 
\DeclareMathOperator{\relhausR}{\overrightarrow{\mathcal{RH}}} 
\newcommand{\RH}[1]{\relhaus\!\paren{#1}} 
\newcommand{\RHd}[1]{\relhaus\!\paren{#1}} 
\newcommand{\RHrd}[1]{\relhausR\!\paren{#1}} 
\DeclareMathOperator{\editdist}{GED} 
\newcommand{\GED}[1]{\editdist\!\paren{#1}} 
\DeclareMathOperator{\kol}{\mathcal{KS}} 
\newcommand{\KS}[1]{\kol\!\paren{#1}} 
\title{Relative Hausdorff Distance for Network Analysis}
\author{Sinan G. Aksoy}
\author{Kathleen E.\ Nowak}
\address{Pacific Northwest National Laboratory, Richland, WA, 99352}
\email{sinan.aksoy@pnnl.gov, kathleen.nowak@pnnl.gov,stephen.young@pnnl.gov}
\author{Emilie Purvine}
\address{Pacific Northwest National Laboratory, Seattle, WA, 98109}
\email{emilie.purvine@pnnl.gov}
\author{Stephen J.\ Young}
\begin{document}

\begin{abstract}
Similarity measures are used extensively in machine learning and data science algorithms. The newly proposed graph Relative Hausdorff (RH) distance is a lightweight yet nuanced similarity measure for quantifying the closeness of two graphs. In this work we study the effectiveness of RH distance as a tool for detecting anomalies in time-evolving graph sequences. We apply RH to cyber data with given red team events, as well to synthetically generated sequences of graphs with planted attacks. In our experiments, the performance of RH distance is at times comparable, and sometimes superior, to graph edit distance in detecting anomalous phenomena. Our results suggest that in appropriate contexts, RH distance has advantages over more computationally intensive similarity measures.
%
\end{abstract}
\maketitle
\section{Introduction}

Similarity measures play a crucial role in many machine learning and data science algorithms such as image classification and segmentation, community detection, and recommender systems. A good deal of effort has gone into developing similarity measures for graphs, in particular, since they often provide a natural framework for representing unstructured data that accompanies many real-world applications. Some popular graph similarity measures currently used are graph edit distance \cite{Sanfeliu1983}, iterative vertex-neighborhood identification \cite{Blondel2004, Kleinberg1999}, and maximum common subgraph based distance \cite{Fernandez2001}. However, as graph datasets grow larger and more complex, the need for tools that can \textit{both} capture meaningful differences and scale well is becoming more critical. In this respect, a number of sophisticated yet costly graph similarity measures, such as those listed above, fall short.

The recently proposed graph Relative Hausdorff (RH) distance \cite{Simpson2015} is a promising measure for quantifying similarity between graphs via their degree distributions. Inspired by the Hausdorff metric from topology \cite{Hausdorff1914}, RH distance was devised to capture degree distribution closeness at all scales, and hence is well-suited for comparing the heavy-tailed degree distributions frequently exhibited by real-world graphs.  Furthermore, as recent work has shown \cite{RHLinear}, RH distance is extremely lightweight, with time complexity linear in the maximum degrees of the graphs being compared. However, as this metric is relatively new, it has not yet been extensively vetted. In particular, current research has not addressed its potential as an anomaly detection method for time-evolving graphs.

In this work, we conduct a statistical and experimental study of RH distance in the context of dynamic graphs. While RH distance may be applied to arbitrary pairs of networks, we focus our attention on sequences of time-evolving networks arising from cyber-security applications.  We begin by first applying RH distance to cyber-security logs recently released by Los Alamos National Laboratory, and investigate the extent to which it detects identified ``red team" events. Then we follow up by studying RH distance in the more general and controlled context of random dynamic graph models. Here we generate sequences of correlated Chung-Lu random graphs using a simplified cyber-security model proposed by Hagberg, Lemons, and Mishra \cite{Hagberg2016}, and test the extent to which RH distance detects several planted attack profiles. Throughout our analysis, we compare the performance of RH to that of more well-known graph similarity measures, such as edit distance and Kolmogorov-Smirnov distance of degree distributions. With this work, we better clarify the range of differences captured by RH, and also highlight its practical advantages and disadvantages over other methods.

\section{Preliminaries}
\begin{figure*}[t]
    \centering
      \includegraphics[scale=0.7]{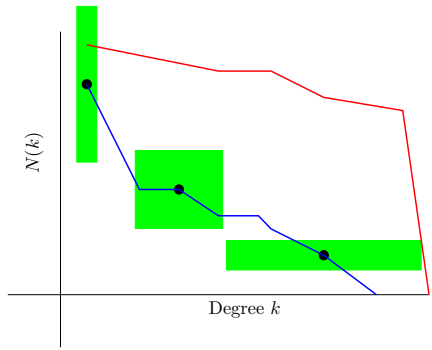}
        \caption{An illustration of the RH equivalent of an $\epsilon$-ball (in green) on one graph ccdh (in blue) compared with another (in red) at different points.} \label{fig:RHplot}
\end{figure*}

\subsection{Graph similarity measures}
Below we define the graph similarity measures we consider for anomaly detection in time-evolving graphs. We begin with the graph Relative Hausdorff distance, the primary focus of our study.
\subsubsection{Relative Hausdorff distance}
Originally introduced by Simpson, Seshadhri, and McGregor \cite{Simpson2015}, the Relative Hausdorff (RH) distance between graphs is a numerical measure of closeness between their complementary cumulative degree histograms (ccdh). More precisely, the (discrete) ccdh of a graph $G$ is defined as $\paren{N(k)}_{k=1}^{\infty}$, where $N(k)$ denotes the number of vertices of degree {\it at least} $k$. This is related to the commonly used {\it degree distribution}, which is defined as $(n(k))_{k=1}^{\infty}$ where $n(k)$ denotes the number of vertices with degree {\it exactly} $k$. Note that the ccdh and degree distribution are equivalent in the sense that each can be uniquely obtained from the other; nonetheless, for the purpose of this exposition, it is more convenient to work with the ccdh. Slightly abusing notation, we write $G(d)$ for a graph $G$ to mean the value of the ccdh of $G$ at $d$, and let $\Delta(G)$ denote the maximum degree of $G$. With these definitions in hand, the (discrete) RH distance between $F$ and $G$ is then defined as:

\begin{definition}[(Discrete) Relative Hausdorff distance \cite{Simpson2015}] \label{def:origRH}Let $F,G$ be graphs. The {\it discrete directional Relative Hausdorff distance} from $F$ to $G$, denoted $\RHrd{F,G}$, is the minimum $\epsilon$ such that
\begin{align*}
\forall d \in \{1,\dots, \Delta(F)\}, \exists d' \in \{1, \dots, \Delta(G)+1\} \mbox{ such that } |d-d'| \leq \epsilon d \mbox{ and } |F(d)-G(d')|\leq \epsilon F(d),
\end{align*}
and $\RHd{F,G}=\max\{\RHrd{F,G},\RHrd{G,F}\}$ is the {\it discrete Relative Hausdorff distance} between $F$ and $G$.
\end{definition}

In this paper, we compute RH distance using smoothed ccdhs, in which successive points are connected via line segments, as recommended by \cite{Matulef2017, Stolman2017}. Specifically, the authors define the smooth ccdh of a graph $G$, $G(d): \R_{\geq 1} \rightarrow \R_{\geq 0}$, as
$$
G(d) = \begin{cases}
\text{\# of vertices of degree at least } d,  &d \in \mathbb{Z}_{\geq 1} \\
(d-\lfloor d \rfloor)G(\lfloor d \rfloor)+(\lceil d\rceil-d)G(\lceil d \rceil), & d \in \mathbb{R}_{\geq 1}\setminus \mathbb{Z}.
\end{cases}
$$
In this case, the RH distance is defined much the same as before, except that the ccdh is piecewise linear. An illustration of the RH equivalent of an $\epsilon$-ball at points on a smooth ccdh is given in Figure \ref{fig:RHplot} and the precise definition of smooth RH distance is given below. Henceforth, we focus exclusively on smooth RH distance so we will drop the qualifier.

\begin{definition}[(Smooth) Relative Hausdorff distance \cite{Matulef2017, Stolman2017}] \label{def:continuousRH}Let $F,G$ be graphs. The {\it smooth directional Relative Hausdorff distance} from $F$ to $G$, denoted $\RHrd{F,G}$, is the minimum $\epsilon$ such that
\begin{align*}
\forall d \in \{1,\dots, \Delta(F)\}, \exists d' \in \mathbb{R}_{\geq 1} \mbox{ such that } |d-d'| \leq \epsilon d \mbox{ and } |F(d)-G(d')|\leq \epsilon F(d),
\end{align*}
and $\RHd{F,G}=\max\{\RHrd{F,G},\RHrd{G,F}\}$ is the {\it smooth Relative Hausdorff distance} between $F$ and $G$.
\end{definition}

By definition, $\RHd{F,G}=\epsilon$ means that for every degree $k$ in the graph $F$, $F(k)$ is within $\epsilon$-fractional error of $G(k')$ for some $k'$ within $\epsilon$-fractional error of $k$. Hence, the RH measure is flexible in accommodating some error in both vertex degree values as well as their respective counts, yet strict in requiring that {\it every point} in $F$ be $\epsilon$-close to $G$ (and vice versa).

Despite being referenced as a ``distance", RH distance does not satisfy the triangle inequality and hence is best viewed as a similarity measure rather than a bona-fide distance metric. However, to match existing literature we will still use the term ``RH distance.'' As shown in \cite{RHLinear}, $\RH{F,G}$ is extremely lightweight, and can be computed with run time $\mathcal{O}(\Delta(F)+\Delta(G))$. Lastly, we note that while RH distance values exceeding one have been  considered indicative of a ``large" dissimilarity between ccdhs, see \cite{Simpson2015}, it was shown in \cite{RHLinear} that the RH distance between graphs can be as large as $\mathcal{O}(m)$ when comparing graphs on $m$ and $n$ vertices with $m\geq n$.

\subsubsection{Other measures}

While RH distance will be the focus of the present work, we also consider several other graph similarity measures in order to provide relevant context for its performance.  First, we consider another comparable lightweight, ccdh-based measure called Kolmogorov-Smirnov (KS) distance. KS distance is a widely-used statistical measure of similarity between distributions, and serves as the test statistic for the two-sample KS hypothesis test \cite{Young1977}. In what follows, we will not only compute KS distance directly between graph degree distributions, but also between distributions of graph similarity values, such as RH values. To avoid confusion, below we define both KS distance as well as the two-sample KS hypothesis test (for which KS distance is a test statistic) for general empirical distributions.

\begin{definition}[KS distance and two-sample KS test]
Let $F \in \mathbb{R}^n$ and $G \in \mathbb{R}^m$ be empirical cumulative distribution functions. The Kolmogorov--Smirnov distance is
\[
\KS{F,G}=\max_x |F(x)-G(x)|.
\]
The null hypothesis that $F$ and $G$ are samples of two identical probability density functions is rejected at the $\alpha$ confidence level if
\[
\KS{F,G} > c(\alpha) \sqrt{\frac{n+m}{nm}},
\]
where $c(\alpha)=\sqrt{-\frac{1}{2} \log{\alpha}}$. The value $\exp({-\frac{2 \KS{F,G}^2 n m}{n+m} })$ is called the $p$-value of the two-sample KS test.
\end{definition}
For clarity, we emphasize that smaller values of KS distance indicate greater {\it similarity} between distributions, whereas smaller $p$-values permit one to reject the null hypothesis of identical underlying distributions at a higher confidence level, thereby presenting stronger evidence the empirical distributions were drawn from {\it different} underlying distributions. For the special case that $F$ and $G$ are the ccdhs of two graphs on $n$ and $m$ vertices, respectively, KS distance is given by $\KS{F,G}=\max_{x \in \mathbb{N}} |\widetilde{F}(x)-\widetilde{G}(x)|$, where $\widetilde{F}=\tfrac{1}{n}\cdot F$ and $\widetilde{G}=\tfrac{1}{m}\cdot G$ .
As argued in \cite{Matulef2017, Stolman2017}, KS distance between graph ccdhs can sometimes be large in graphs that are intuitively similar; furthermore, KS distance may also be insensitive to certain important differences between graph ccdhs, particularly in the tails of ccdhs (which correspond to high-degree vertices).

On the other end of the computational spectrum, we also consider {\it graph edit distance} (GED). Arguably one of the most well-known graph similarity measures, GED has been widely used throughout  machine learning, particularly in computer vision and pattern recognition contexts \cite{Gao2009}. An {\it edit operation} on a graph consists of either an {\it insertion}, {\it deletion}, or {\it substitution} of a single vertex or edge.\footnote{While others sometimes including {\it merging} and {\it splitting} edit operations, we restrict our attention to edit distance based on the three aforementioned operations.}
An {\it edit path of length $k$} between $F$ and $G$ is a sequence of edit operations $\mathcal{P}=(e_1,\dots,e_k)$ that takes $F$ to a graph that is isomorphic to $G$. Edit distance is the total weight of the minimum-cost edit path, i.e.

\begin{definition}[Graph edit distance]
Let $F,G$ be graphs. The {\it graph edit distance} between $F$ and $G$, denoted $\GED{F,G}$ is given by
\begin{align*}
\GED{F,G}&= \min_{\mathcal{P} \in \Upsilon(F,G)} \sum_{e_i \in \mathcal{P}} c(e_i),
\end{align*}
where $\Upsilon(F,G)$ denotes the set of possible edit paths from $F$ to $G$, and $c(e_i)\geq 0$ denotes a cost-function measuring the weight of edit operation $e_i$.
\end{definition}
In what follows, we simply take $c(e_i)=1$ for any edit operation, in which case $\GED{F,G}$ is the minimum number of edit operations needed to transform $F$ to $G$.

\subsection{Related Literature}

While in this work we focus on the graph Relative Hausdorff distance, we note that a wide variety of graph similarity measures have been utilized for anomaly detection in time-evolving graphs.
In \cite{ishibashi2010detecting}, the authors propose detecting anomalies in communication network traffic data by measuring cosine similarity between the principal eigenvectors of graph adjacency matrices.
In \cite{akoglu2010event}, Akoglu and Faloutsos also take an eigenvector-based approach for measuring graph anomalousness.
Matrix-analytic graph similarity measures have also been based on eigenvalue residuals \cite{Giuseppe2011}, non-negative matrix factorization \cite{Tong2011}, and tensor decompositions \cite{sapienza2015anomaly}.
Other popular approaches for graph-based anomaly detection are via {\it distance metrics}, such as those based on edit distance, maximum common subgraph distance, or mean vertex eccentricity \cite{Gaston2006UsingGD}, or take a community-detection approach towards identifying anomalies by tracking changes between clusters of well-connected vertices \cite{aggarwal2011outlier, Wang2017}.
For a broader survey of graph-based anomaly detection techniques see \cite{Akoglu2014, Ranshous2015, sensarma2015survey} and the references contained therein.
Lastly, we note that applications of graph similarity functions extend far beyond anomaly detection.
Graph similarity functions are also ubiquitous in inexact graph matching and graph classification problems.
For instance, graph edit distance is a key tool for error-tolerant pattern recognition and computer vision techniques~\cite{Gao2009}.
While in this work we explore Relative Hausdorff distance through the lens of anomaly detection, we note its application as a graph similarity measure in other contexts such as these remains unexplored.

As our focus in this paper will be on \emph{cyber} anomaly detection\footnote{Note that we do not consider signature-based methods like those employed in intrusion detection/prevention systems (e.g., Snort) to be anomaly detection methods. Instead these are rule-based behavior identification tools.} in network flow data, we also mention some of the existing graph-based methods specifically for cyber anomaly detection.
In this domain, some researchers focus on detecting anomalies edge-by-edge or target specific types of behavior, e.g., \cite{franccois2011bottrack,retina}, while others look at the graph more globally or structurally and are agnostic to the type of anomalous behavior being detected, e.g., \cite{ChPChSHeA2016}.
In \cite{retina} Noble and Adams describe a real-time unsupervised framework for detecting anomalies in network data. They consider ``edge activity'' as the sequence of flows on a single edge and compute correlations between event inter-arrival times and other edge data (e.g., byte count or protocol). Statistically significant changes in those correlations are flagged as anomalies. Groups of adjacent anomalies can be combined to form larger anomalies perhaps indicating coordinated behavior.
The authors of \cite{franccois2011bottrack} use PageRank to perform linkage analysis followed by clustering techniques to identify groups of IPs with similar behavior. These groups are then compared with known bot behavior to detect botnets within the network.
In the category of more structural and behavior-agnostic algorithms \cite{ChPChSHeA2016} introduces multi-centrality graph PCA and multi-centrality graph dictionary learning which use structural properties of a graph, e.g., walk statistics and centrality measures, to learn normal structure and thus detect abnormal structure. This method is not tailored to the cyber use case, but the authors use network flow as one of their examples.
Our work is similarly not targeted towards a specific cyber use case and is focused on detecting structural perturbations rather than clustering behavioral patterns.

\section{Los Alamos National Laboratory (LANL) Cybersecurity Data}

To begin our study of RH distance as an anomaly detection method for dynamic graphs, we will first consider a dataset recently released by LANL with known red team events from their internal corporate computer network \cite{Kent:LANL_RedTeamData, Kent:LANL_RedTeam}. The dataset represents 58 consecutive days of de-identified event data collected from four sources, namely:
\begin{itemize}
\item Windows-based authentication events from both individual computers and centralized active directory domain controller servers,
\item Process start and stop events from individual Windows computers,
\item Domain Name Service (DNS) lookups collected by internal DNS servers, and
\item Network flow data collected at several key router locations.
\end{itemize}
In total, the data set is approximately 87.4 gigabytes, spread across the four modalities, including 1,648,275,307 events coming from 12,425 users, 17,684 computers, and 62,974 processes. Ground truth for the red team events is given as a set of authentication events that are known red team compromise events. In this section, we will demonstrate that Relative Hausdorff distance is effectively able to identify anomalous behavior around the red team events in the LANL data.

\subsection{Data Source}
As stated above, LANL captured network evolution in four different modalities, namely authentication, process, network flow, and DNS events. In order to apply the RH distance, we first must convert these network event files into a time series of graphs. To do so, we consider 60 second moving windows that advance 20 seconds at a time. For each window we use the events in that window to construct a graph. For the 58 consecutive days, this yields a time sequence of 250,560 graphs for each modality. Further details for constructing each type of graph are given below.

\begin{itemize}
\item{\bf Authentication Graphs.}
The authentication data is a record of authentication events collected from individual Windows-based desktop computers, servers, and Active Directory servers. Each line of the data file reports a separate authentication event in the form
\begin{center}
 \texttt{time, sourceUser@domain, destUser@domain, source computer, dest computer,\\ auth type, logon type, auth orientation, pass/fail.}
\end{center}
For a given window, we construct an unweighted graph with edges $\{\text{sourceUser, destUser}\}$ for each user pair present in the logs within the window.
\item{\bf Authentication Failure Graphs.}
These are constructed in the same manner as the Authentication Graphs, except we restrict the edge set to those corresponding to failed authentications only.
\item{\bf Process Graphs.}
The process data is a record of process start and stop events collected from individual Windows-based desktop computers and servers. Each line of the data file reports a separate process start/stop in the form
\begin{center}
\texttt{
time, user@domain, computer, process name, start/end.}
\end{center}
For a given window, we construct an unweighted graph with edges $\{\text{computer, process name}\}$ for each computer-process pair present in the logs within the window.
\item{\bf DNS Graphs.}
The DNS data is a record of DNS lookup events collected from the central DNS servers within the network. Each line of the data file reports a separate lookup event in the form
\begin{center}
\texttt{
time, source computer, computer resolved,}
\end{center}
representing a DNS lookup at the given time by the source computer for the resolved computer. For a given window, we construct an unweighted graph with edges $\{\text{source computer, computer resolved}\}$ for each source-resolved computer pair present in the logs within the window.
\item{\bf Flow Graphs.}
The flow data is a record of the network flow events collected from central routers within the network. Each line of the data file reports a separate network flow event in the form
\begin{center} \texttt{
time, duration, source computer, source port, dest computer, dest port, protocol, \\ packet count, byte count.}
\end{center}
For a given window, we construct an unweighted graph with edges $\{\text{source computer, dest computer}\}$ for each source-destination computer pair that communicate during that time.
\end{itemize}

\subsection{Limitations}
While working with real-world data often presents challenges, testing graph-based anomaly detection methods on the LANL dataset is particularly difficult for several reasons. First and foremost, the data only provides red team authentication attempt time stamps and does not specify the nature, extent or duration of the red team events. This makes it difficult to segregate benign from anomalous time periods. Additionally, without knowing the specific red team actions, it is difficult to determine which (if any) of the aforementioned modalities a red team signature may appear in. Finally, it is worth noting the data exhibited large periods of time in which no events occurred that did not correspond to regular lulls such as weekends and nighttime. In particular, the flow data has records from only the first 37 of the 58 days.  To address some of these limitations, in Section \ref{sec:HLM} we extend our analyses to a generalized dynamic network model \cite{Hagberg2016} proposed by LANL scientists Hagberg, Mishra, and Lemons. While no synthetic model is a perfect substitute for real data, this model's conception and design was directly informed by direct access to the LANL cyber data \cite{kent2014anonymized, kent2016cyber} and provides a framework under which we may draw more certain and rigorous conclusions regarding the behavior of RH distance. First, we present our analysis of the real LANL data.

\subsection{Experiment and results}
As a first-pass approach towards studying the sensitivity of RH distances to red team events in the LANL dataset, we test whether the distribution of pairwise RH distance values before a red team event differs significantly from the post red team event distribution. In this way, we assess whether there is statistical evidence to support that red team events demarcate ``change-points" in RH distance distribution. To that end, for each red team event at time $r$, we associate a time window $w$ of length $\ell$ centered at $r$, which we denote $w_{\ell}(r)$. Each such window can be naturally partitioned into a ``before" period (i.e.\ the time interval $(r-\nicefrac{\ell}{2},r)$) and ``after" period, $(r, r+\nicefrac{\ell}{2})$. To avoid overlapping windows and ensure the ``before" periods are in fact devoid of red team events, we restrict attentions to windows in which no red team event occurs in the time interval $(r-\nicefrac{\ell}{2},r)$. Put equivalently, we consider the set
\[
W_\ell=\{w_\ell(r): \mbox{red team event occurs at $r$, no red team events occur in $(r-\nicefrac{\ell}{2},r)$}\}.
\]
We note that it is possible for the after period of a window in $W_\ell$ to contain additional red team events. For each window in $W_\ell$, we compute the RH distances between pairs of graphs separated by $\delta$ seconds in the before period, as well as such pairs belonging to the after period. We then aggregate the RH distances over all before periods and all after periods. More precisely, if $G_{0},G_{1},\dots$ denotes the time-ordered sequence of graphs for a particular mode in the LANL data, we compute the aggregate before and after distributions as
\begin{align*}
D_b &= \{\RH{G_{t},G_{t+\delta}}: t,t+\delta \in (r-\nicefrac{\ell}{2},r) \mbox{ and } w_\ell(r) \in W_\ell \},\\
D_a &= \{\RH{G_{t},G_{t+\delta}}: t,t+\delta \in (r, r+\nicefrac{\ell}{2}) \mbox{ and } w_\ell(r) \in W_\ell \},
\end{align*}
respectively. Recalling that we processed the LANL graph sequence for each modality by generating graphs for windows shifted by 20 seconds, we may choose the parameter $\delta$ controlling the granularity of pairwise RH measurements to be as small as $20$ seconds and and as large as  $\nicefrac{\ell}{2}-20$ seconds. Finally, we assess whether these aggregated before and after RH distance distributions differ significantly by conducting a two-sample Kolmogorov-Smirnov test. Table \ref{agg:pvals} presents the resulting $p$-values for $\delta=20,40,60,120,240$ seconds, under window lengths $\ell=30,60,120$ minutes, for each LANL modality.

\begin{table}[t]
\begin{tabular}{c||ccccc|ccccc|ccccc}
\toprule
& \multicolumn{5}{c}{Window: 30 min} & \multicolumn{5}{c}{Window: 60 min} & \multicolumn{5}{c}{Window: 120 min}\\
$\mbox{Mode/Shift}$& \multicolumn{1}{c}{20s} & \multicolumn{1}{c}{40s} & \multicolumn{1}{c}{60s} & \multicolumn{1}{c}{120s} & \multicolumn{1}{c}{240s} & \multicolumn{1}{c}{20s} & \multicolumn{1}{c}{40s} & \multicolumn{1}{c}{60s} & \multicolumn{1}{c}{120s} & \multicolumn{1}{c}{240s} & \multicolumn{1}{c}{20s} & \multicolumn{1}{c}{40s} & \multicolumn{1}{c}{60s} & \multicolumn{1}{c}{120s} & \multicolumn{1}{c}{240s}  \\ \midrule
\mbox{AuthFail} &0.20 & 0.49 & 0.70 & 0.97 & 0.57 & \cellcolor{green}0.01 & 0.39 & 0.10 & 0.76 & \cellcolor{yellow}0.02 & \cellcolor{yellow}0.02 & \cellcolor{green}0.01 & \cellcolor{yellow}0.02 & \cellcolor{green}0.01 & \cellcolor{green}0.00 \\
\mbox{Auth} &0.36 & 0.72 & 0.17 & 0.38 & 0.12 & 0.33 & \cellcolor{yellow}0.03& \cellcolor{green}0.01 & 0.63 & 0.06 & 0.35 & \cellcolor{green}0.00 & \cellcolor{green}0.00 & 0.36 & 0.07  \\
\mbox{Flow} &0.61 & 0.30 & 0.31 & 0.76 & 0.91 & 0.75 & 0.49 & 0.17 & 0.31 &  0.28 & \cellcolor{green}0.00 & \cellcolor{green}0.00 & \cellcolor{green}0.00 & \cellcolor{yellow}0.02 & \cellcolor{green}0.00  \\
\mbox{DNS} &0.40 & \cellcolor{yellow}0.02 & 0.41 & 0.77 & 0.12  & 0.97 & 0.36 & 0.60 & 0.86 & 0.16 & 0.59 & 0.55 & 0.13 & 0.34 & \cellcolor{yellow}0.04 \\
\mbox{Process} & 0.44 & 0.82 & 0.64 & 0.96 & 0.74 & \cellcolor{yellow}0.03 & 0.31 & 0.08 & 0.08 & \cellcolor{yellow}0.05 & 0.08 & 0.43 & 0.17 & 0.06 & \cellcolor{green}0.01 \\
\bottomrule
\end{tabular}
\caption{The $p$-values of the two-sample KS test comparing RH distance distributions of aggregated before and after periods of time windows centered at red team events in the LANL data. The $p$-values are rounded to two decimal places, with rounded values at most 0.01 highlighted in green and values between 0.02 and 0.05 highlighted in yellow.} \label{agg:pvals}
\end{table}

The $p$-values in Table \ref{agg:pvals} suggest that whether the aggregated distribution of RH values before red team events differs significantly from the post red team events depends crucially on the cyber modality, window length and granularity parameter $\delta$. In the case of a 30 minute window, almost none of the parameter settings for any modality result in statistical significance, while for a 2-hour window, a majority of parameter settings are significant at a level of 0.05. In this case, the before and after RH distance distributions over longer time windows surrounding red team events more frequently show significant differences, which is perhaps unsurprising. On the other hand, the changes in significance levels as the granularity parameter $\delta$ varies are more difficult to interpret. Even for a fixed window length and modality, the significance levels neither consistently increase nor decrease in $\delta$.

One plausible hypothesis for this experiments sensitivity to $\delta$ is that RH distance values exhibit periodic behavior both within and across modalities, reflecting the natural circadian rhythms one might expect from temporal cyber data. If this were the case, the choice of $\delta$ may skew the RH values sampled when constructing the representative before and after distributions. To check whether such periodicity is indeed present in the RH distance measurements on LANL, we constructed heatmaps of RH distances between all pairs of graphs over given time windows. As this requires a quadratic number of comparisons, it is worth noting this analysis is crucially facilitated by the lightweight computational complexity of RH distance. We examined heatmaps not only for windows surrounding anomalies, but also for time windows away from red team events. Figure \ref{fig:heat} (left column) presents sample heatmaps for the Authentication, Flow and Process modalities spanning a 2-hour time period.  In an effort to select a representative window for short-term nominal RH behavior within each modality, this time period was selected so as to not include any red team events nor be preceded or followed by any red team events for 20 hours.  
It is also worth pointing out that the RH distance between pairs of flow graphs regularly exceeds one, indicating that the rough guide for detecting anomalous behavior given in \cite{Simpson2015} is inappropriate for the cyber-security context.
We also transformed each heatmap of pairwise RH distance values into a similarity matrix by applying the Gaussian kernel with $\sigma=1$, and performed normalized Laplacian spectral clustering\footnote{The prescribed number of clusters was chosen to coincide with the first observed gap in Laplacian eigenvalues, as in \cite{Luxburg2007}.}, as described by Ng, Jordan and Weiss \cite{ng2002spectral}. Under the corresponding heatmap, Figure \ref{fig:heat} (right column) plots the pairs of graphs belonging to common clusters, using a different color for each cluster (and white for different clusters).

\begin{figure}[t]
  \centering
  \hfill
  \subfloat[][Authentication Heatmap]{\includegraphics[trim = 28 190 32 210, clip, width = 0.4\textwidth]{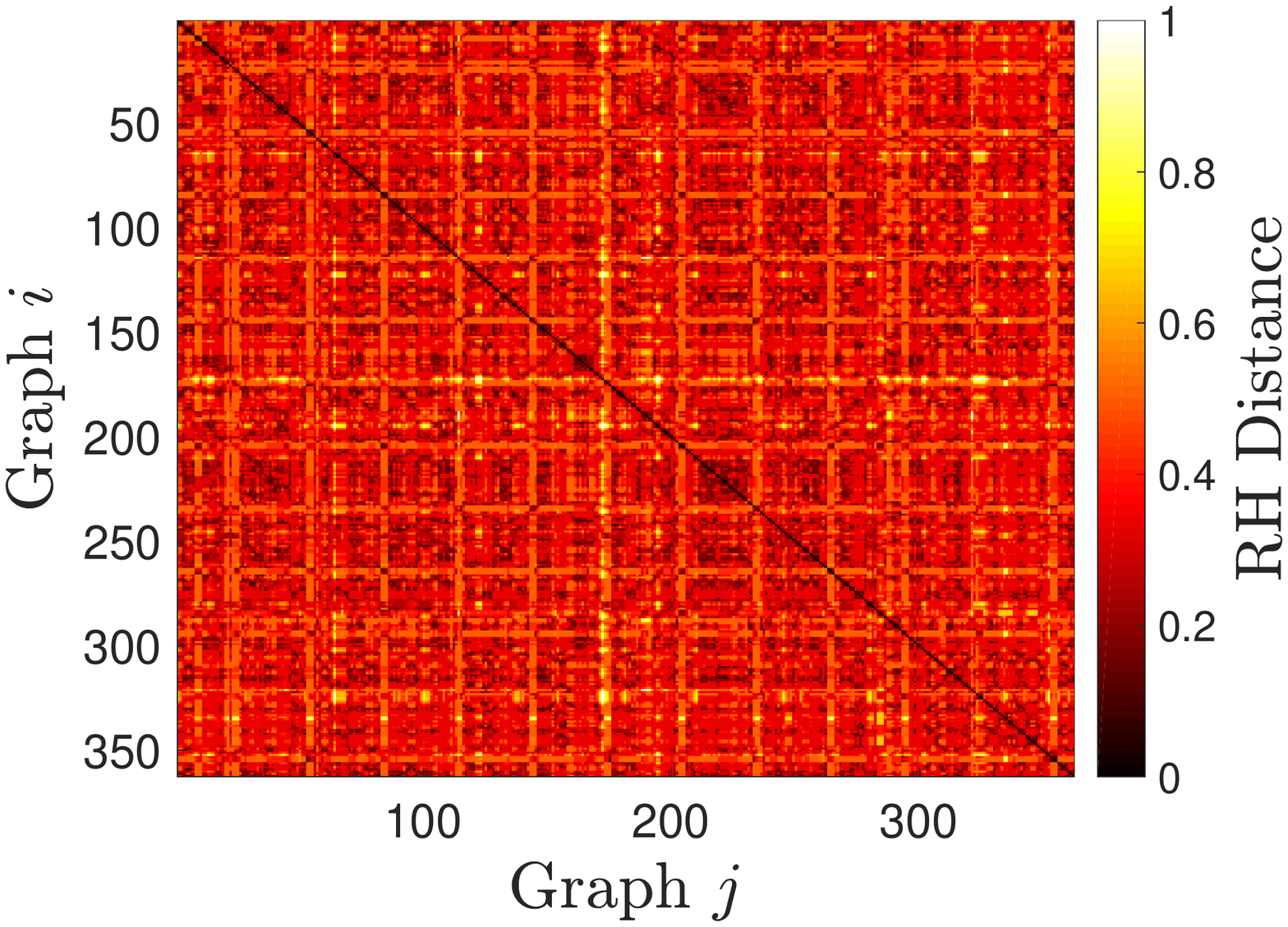}}
  \hfill
  \subfloat[][Authentication Clustering]{\includegraphics[trim = 28 190 32 210, clip, width = 0.4\textwidth]{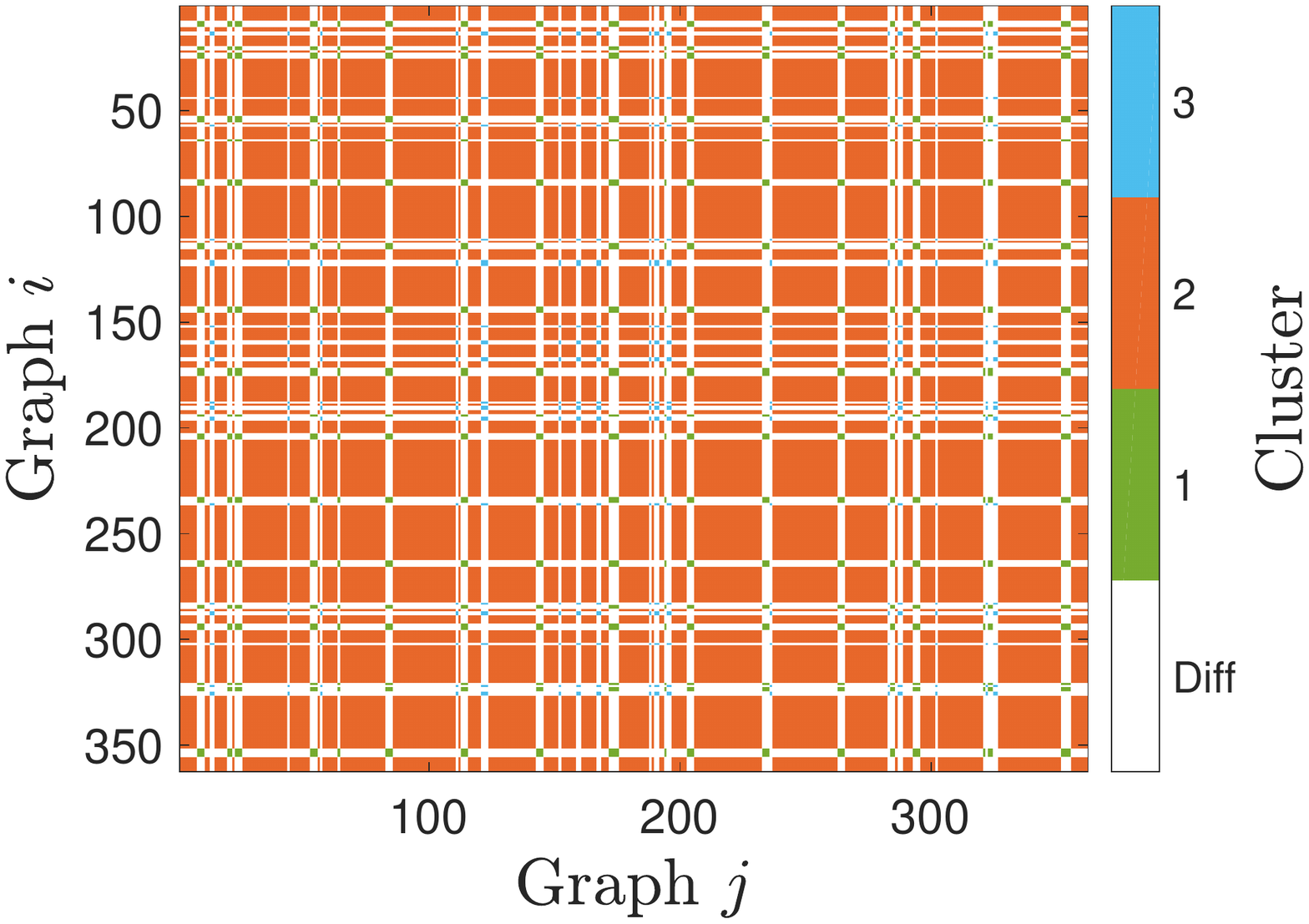}}
  \hfill \phantom{} \\
  \hfill
  \subfloat[][Flow Heatmap]{\includegraphics[trim =28 190 32 210, clip, width = 0.4\textwidth]{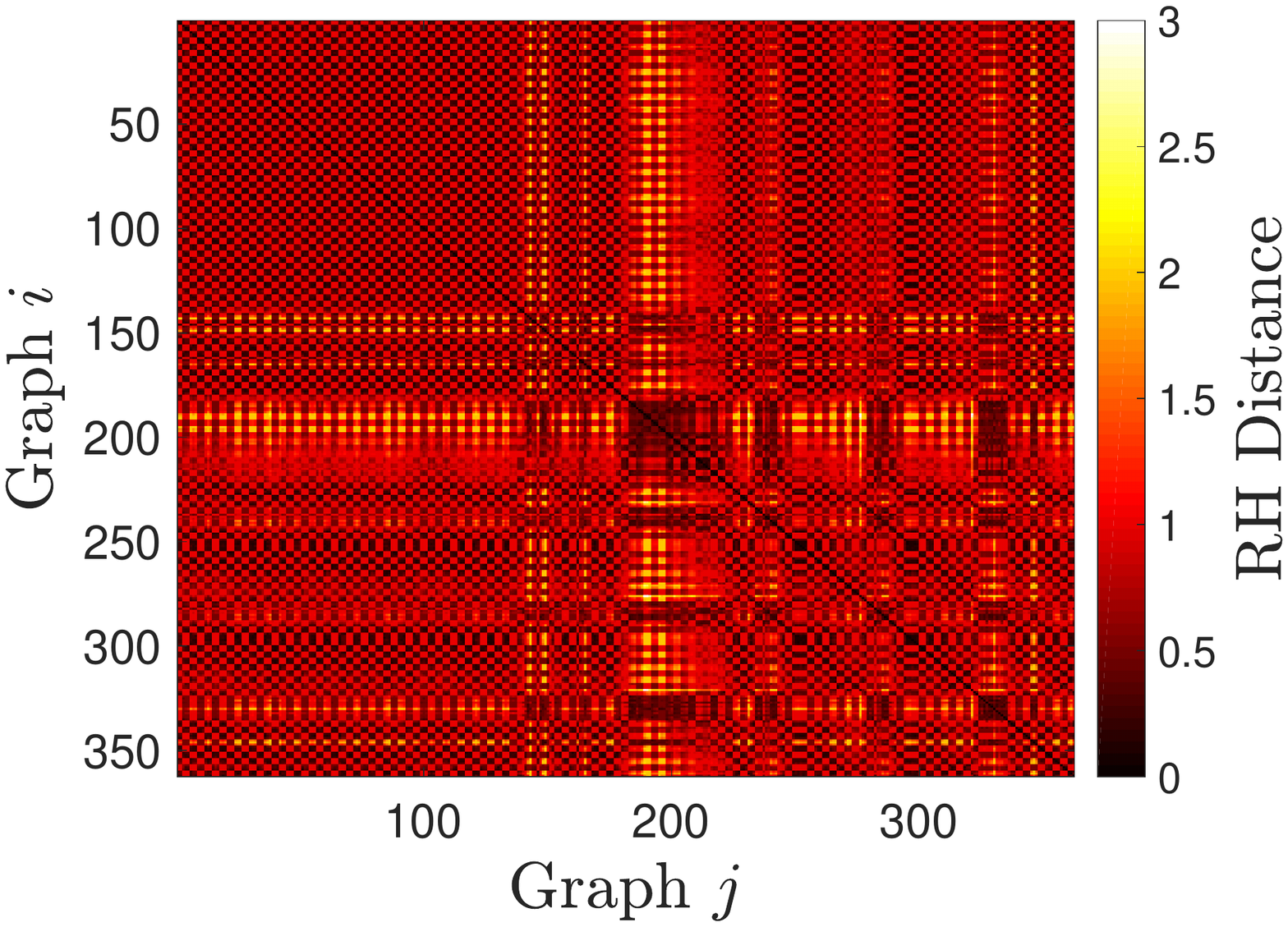}}
  \hfill
  \subfloat[][Flow Clustering]{\includegraphics[trim =28 190 32 210, clip, width = 0.4\textwidth]{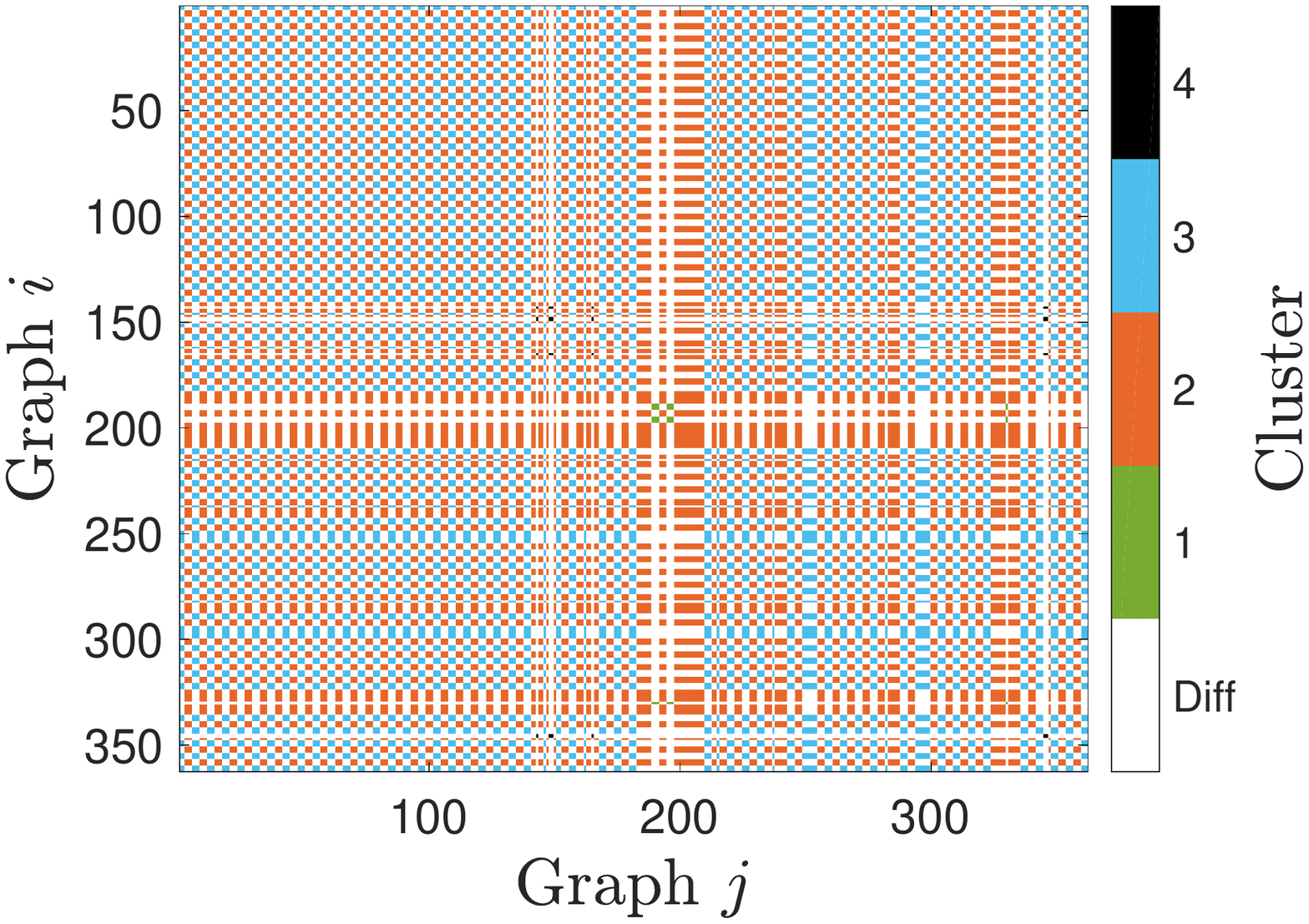}}
  \hfill \phantom{} \\
  \hfill
  \subfloat[][Process Heatmap]{\includegraphics[trim =28 190 32 210, clip, width = 0.4\textwidth]{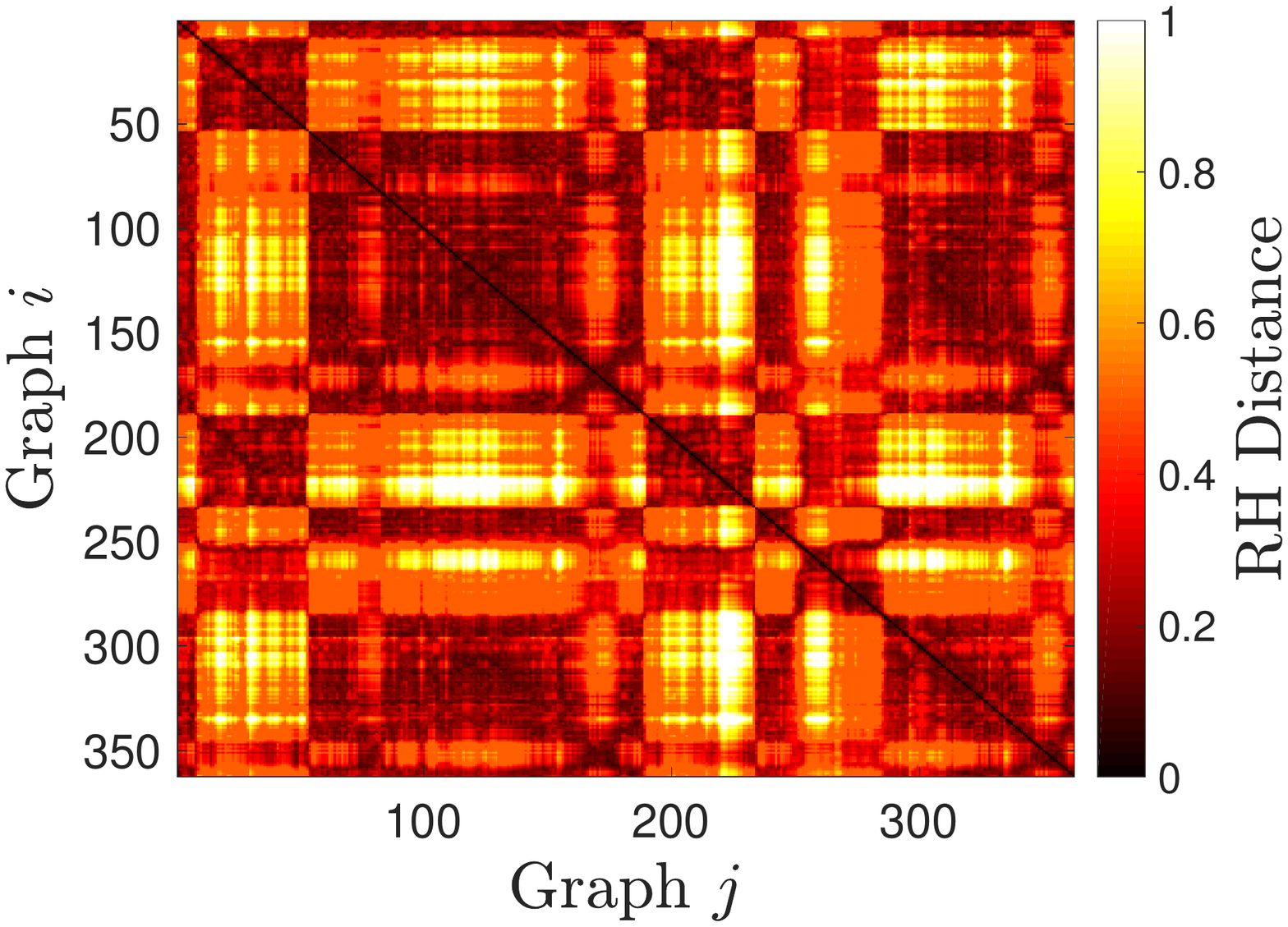}}
  \hfill
  \subfloat[][Process Clustering]{\includegraphics[trim =28 190 32 210, clip,width = 0.4\textwidth]{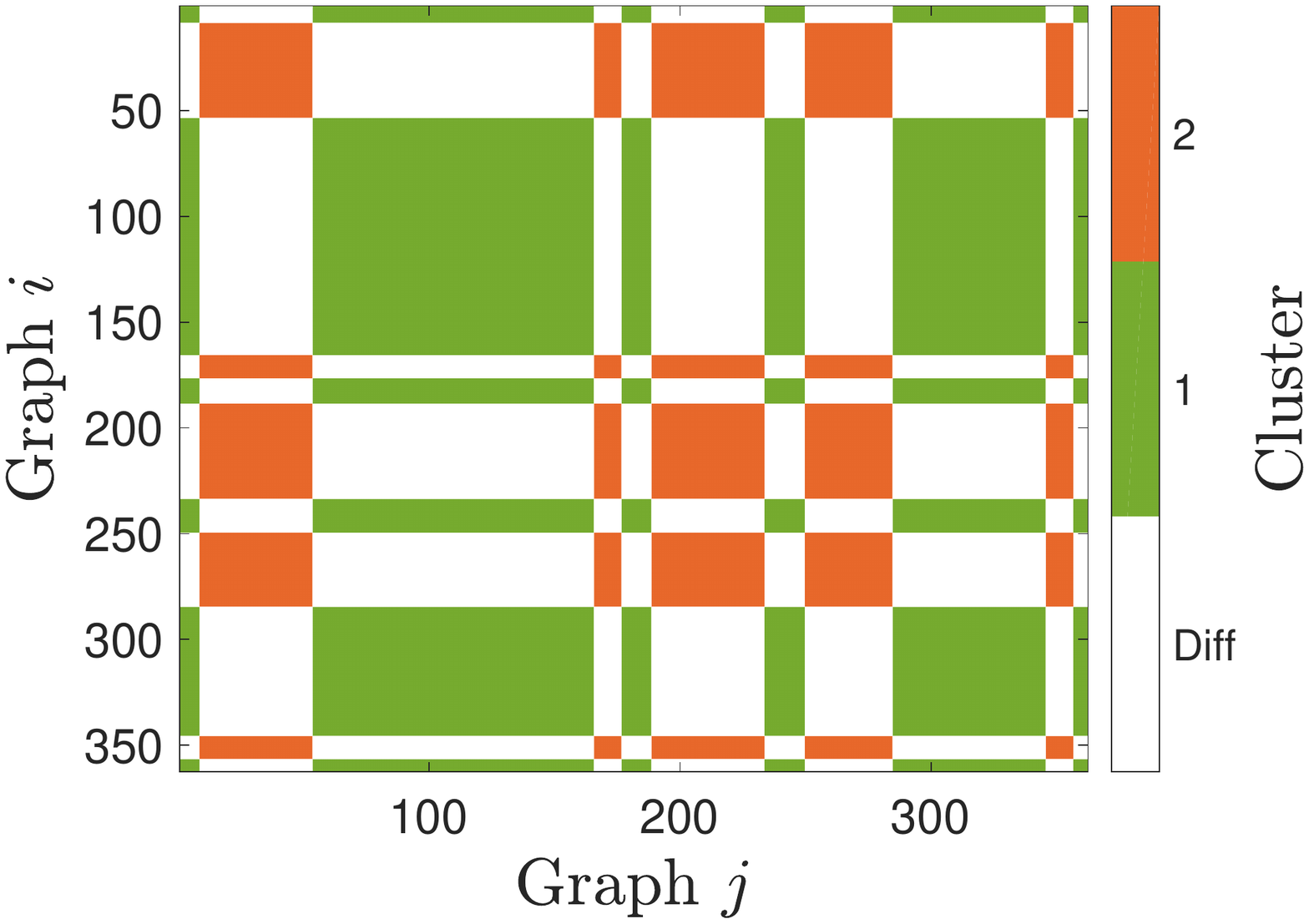}}
  \hfill \phantom{} \\
\caption{\emph{Left Column:} RH distance heatmaps  for nominal two-hour period. \\ \phantom{Figure 3.\quad} \emph{Right Column:} Spectral clustering of similar graphs based on RH distances.}\label{fig:heat}
\end{figure}

A cursory examination of the heatmaps and their clustering suggests that the RH distance values for a given modality exhibit persistent and striking periodic patterns.
Furthermore, in comparing the plots for Flow, Authentication, and Process, the differences in the periodic behavior of RH values are also apparent.
While these periodicities are not entirely unexpected they are likely network- and data-dependent.
Detecting and visualizing periodic behavior in network data is an active area of research, e.g., \cite{vizPeriodicity,hubballi2013flowsummary,periodicityLANL}.
As a consequence of this experiment and examination of the heatmaps in Figure \ref{fig:heat}, it is clear that a single choice of granularity parameter $\delta$ is likely insufficient in establishing a representative distribution of RH values within a time window for any given modality. Accordingly, next we refine our experiment to better account for the inherent {\it multi-scale} and {\it multi-modal} nature of the LANL data.

One of the many difficulties with investigating the effectiveness of RH distance in detecting anomalous behavior associated with red team events in the LANL data sets is that the red team process is inherently multi-modal and multi-scale.  That is, the red team events identified in the data are simply the first step of the red team intrusion process which could potentially affect all of the data modalities (process, DNS, flow, and authentication) and occur over multiple time scales.  To attempt to deal with this issue, we craft an indicator for each time that considers the RH distance between graphs with multiple time differences and in multiple modalities.  More concretely, let $\mathcal{S}$ be the set of potential data sources and let $\mathcal{G} = \set{G_{t,s}}$ be the collection of observed graphs indexed by the time $t$ and data source $s \in \mathcal{S}$.  For any fixed timestamp $t$ and collection of differences $\mathcal{D}$, we will define the \emph{profile vector} at time $t$, $v^{(t)}$, as the vector given by $\paren{\RH{G_{t,s},G_{t-\delta,s}}}_{s\in\mathcal{S},\delta \in \mathcal{D}}$.  Ideally, this profile could be used to aggregate the behavior across multiple modalities and multiple time scales and give a clearer picture of the overall state of system.  However, there is a further complication with this approach in that the LANL data represents a system which has a naturally evolving behavior based on various temporal patterns of human activity (i.e.\ weekday vs. weeknight, circadian rhythms, etc.).  To adjust for these temporal patterns, for every time $t$, source $s \in \mathcal{S}$, and difference $\delta \in \mathcal{D}$, we define a baseline behavior random variable $\mathcal{B}_{t,s,\delta}$ which is the random variable which represents the ``typical'' behavior of $\RH{G_{t,s},G_{t-\delta,s}}$.  This baseline behavior can then be combined with the profile vector $v^{(t)}$ to generate a \emph{temporal profile vector} $\hat{v}^{(t)}$ where for any $(s,\delta) \in \mathcal{S} \times \mathcal{D}$ we have $\hat{v}^{(t)}_{s,\delta} = \prob{ v^{(t)}_{s,\delta} - \epsilon < \mathcal{B}_{t,s,\delta} < v^{(t)}_{s,\delta} + \epsilon }$.  We define the \emph{temporal score} of the time $t$ as the geometric mean of the entries of $\hat{v}^{(t)}$
\footnote{As a practical matter, for those entries $(t,s,\delta)$ where there is insufficient or no data to estimate $\hat{v}^{(t)}_{s,\delta}$, the entry is dropped from the vector and ignored in the calculation of the temporal score.}.  In what follows, we will calculate the temporal scores of time periods before and after a red team authentication event and show that there is a statistically significant difference between the behaviors.  In fact, we will show that this temporal scoring methodology is more sensitive than using raw RH scores evaluated in Figure \ref{agg:pvals} indicating that there are significant potential gains to be found by considering multi-modal and multi-scale indicators for anomalous behaviors.

Before applying our results to the LANL data sets, it remains to address how to estimate the distribution of the random variable $\mathcal{B}_{t,s,\delta}$ and how the $\epsilon$ term defines the temporal profile vector is chosen.  For a fixed $t$ we estimate the empirical distribution for $\mathcal{B}_{t,s,\delta}$, we consider the RH distance between all pairs of graphs $\paren{G_{t^*,s},G_{t^{*}-\delta,s}}$ where $t^{*}$ ranges over all times that differ from $t$ by a multiple of a week, plus or minus 10 minutes.  In order to avoid biasing this empirical estimate we exclude times $t^{*}$ where there is a red team event in the interval $[t^{*} - \delta, t^{*}]$ as well as those that are within 10 minutes of $t$.  As we see in Figure \ref{fig:heat} the typical variation of a RH distance changes significantly based on the modality of the observation, both in source and elapsed time between graphs.  Thus, rather than fixing a particular value of $\epsilon$, we choose $\epsilon$ as one twentieth of the range of the empirical distribution for $\mathcal{B}_{t,s,\delta}$.  Finally, for each of the 712 red team times provided in the LANL data we calculate the temporal scores for each graph in the 30 minutes before and after the red team time and apply the two-sample Kolmogorov-Smirnov test, see Table \ref{T:temporal}. We further segregate this data by whether or not additional red team events occur during the 30 minutes prior to the red team time.

\begin{table}
  \begin{tabular}{c | r | r | r || r}
    & {\centering $p \leq 0.10$} & {\centering $p \leq 0.05$} & {\centering $p \leq 0.01$} & {\centering Total} \\ \midrule
    intervals with no prior red team events & 30 & 30 & 21 & 48 \\
    intervals with prior red team events & 478 & 460  & 370 & 664 \\
  \end{tabular}
  \caption{Aggregate behavior of temporal scoring on a per event basis}\label{T:temporal}
  \end{table}

It is clear from Table \ref{T:temporal} that the temporal scores is far from a perfect indicator, as a non-negligible fraction of the changes associated with a red team event are not detected. Nonetheless, it is also apparent that for a relatively lightweight measure the RH distance exhibits reasonable effectiveness in distinguishing between nominal and anomalous behaviors.  However, our conclusions must be somewhat tempered by the challenging nature of real world data and the LANL data in particular.  Specifically, the lack of clear demarkation between anomalous and non-anomalous behavior as well as the limited time-scope of the investigation are significant caveats to any conclusions we make about the effectiveness of RH distance.  In following section, we attempt to address these caveats by analyzing a synthetic temporal graph model inspired by the LANL cyber data.

\section{Simulated Evolving Networks} \label{sec:HLM}
The study of temporal networks is concerned with the analysis and modeling of time-ordered sequences of graphs. In order to better understand temporal network dynamics, researchers have proposed a plethora of abstract models for their simulation (for a survey, see \cite{Holme2012}). In the present work, we consider a temporal graph model that belongs to the broader class of Markovian Evolving Graphs (MEGs) \cite{Avin2008}. Given a probability distribution over the set of all graphs on a fixed vertex set, MEGs have the defining property that the distribution at time $t$ is completely determined by that at $t-1$, thereby forming a sequence of random variables which satisfy the Markov property. Because of their generality and flexibility, MEGs have been popularly used to study {\it information spreading processes}, such as file sharing on peer-to-peer networks, social network memes, and disease spreading \cite{Clementi2014, Clementi2010}.

In \cite{Hagberg2016}, Hagberg, Lemons and Mishra proposed a new MEG model, the design of which was informed by their study of LANL centralized authentication system cyber data \cite{kent2014anonymized, kent2016cyber}. In particular, they observed that these sequences of graphs exhibit certain stable global properties, such as skewed degree distributions, while local dynamics such as individual vertex neighborhoods change rapidly. To capture these dynamics, they designed a temporal model that can be used to preserve certain random graph structure while affording tunable control over the rate of dynamics. We refer to their model as the HLM model. Ultimately, Hagberg et.\ al utilized the HLM model to study {\it temporal reachability}; that is, the expected time (number of evolutions or transitions) before a constant fraction of the vertices are reachable from an arbitrary vertex. We note that although the HLM model was developed to capture abstract dynamics exhibited by cyber data, the HLM model need not be limited to simulating cyber phenomena.  Although, the experiment that follows is driven by cyber-security structures and data, the is no \emph{a priori} reason that a similar experiment could not be applied across the variety of domains for which the evolving nature of the HLM model is appropriate, such as communication networks, social networks, and (on a much slower time scale) transportation networks.
In the remainder of this section, we study the sensitivity of RH distance in detecting several planted attack profiles, utilizing the HLM model to simulate the natural time evolution of a generic cyber network graph topology. Before describing our experimental methodology, we first begin by defining and briefly discussing the HLM model.

\subsection{Hagberg-Lemons-Mishra (HLM) Model}
As the HLM model can be viewed as a time-evolving generalization of
the Chung-Lu model $\CL{w}$, we will first briefly review the Chung-Lu
model as introduced in
\cite{Chung:ExpectedDegAverageDist,Chung:ExpectedDegreeAverageDist}.
The parameterization vector of the Chung-Lu model, $w$, is
$n$-dimensional where $n$ is the number of vertices in the graph.
Additionally, the vector $w$ satisfies that $w_v \leq \sqrt{\rho}$ for all $v$
where $\rho = \sum_{i=1}^n w_i$.  From the parameter $w$ the Chung-Lu
model is generated by including each edge $\set{u,v}$, independently, with
probability $\nicefrac{w_uw_v}{\rho}$.   For overview of many of the
known properties of the Chung-Lu model see the recent monograph
\cite{Chung:ComplexNetworks}.

The HLM model generates an infinite sequence of graphs $G_0, G_1, G_2,
\ldots$ with the property that there is a fixed vector $w$ such that for all $i$, $G_i \eqd \CL{w}$ where $\eqd$ is
equality in distribution.  In order to generate this sequence an
additional parameter, $\alpha$, is introduced to tune the extent to
which graph $G_{i+1}$ is controlled by $G_i$.   Specifically, $\alpha
\in [0,1]^n$ and $G_{i+1}$ is formed from $G_i$ by generating a masking
set $M$ where each pair $\set{u,v}$ is in $M$ independently with
probability $\sqrt{\alpha_u\alpha_v}$.  For an edge $\set{u,v} \not\in
M$, $\set{u,v} \in G_{i+1}$ if and only if $\set{u,v} \in G_i$, while
each potential edge $\set{u,v}$ in $M$ is present independently with
probability $\nicefrac{w_uw_v}{\rho}$.  In summary, we have that
\[\prob{\set{u,v} \in G_{i+1}} = \begin{cases} 1 -
  \sqrt{\alpha_u\alpha_v} +
  \sqrt{\alpha_u\alpha_v}
  \nicefrac{w_uw_v}{\rho}& \set{u,v} \in G_i
  \\
  \sqrt{\alpha_u\alpha_v}\nicefrac{w_uw_v}{\rho} & \set{u,v} \not\in G_i\end{cases}.\]
The fact that $G_{i+1} \eqd G_i$ follows by induction and the
observation that
\[ \nicefrac{w_uw_v}{\rho}\paren{1 -
  \sqrt{\alpha_u\alpha_v} +
  \sqrt{\alpha_u\alpha_v}
  \nicefrac{w_uw_v}{\rho}} +
\paren{1-\nicefrac{w_uw_v}{\rho}}\sqrt{\alpha_u\alpha_v}\nicefrac{w_uw_v}{\rho}
= \nicefrac{w_uw_v}{\rho}.\]

We note that there is a natural trivial generalization of the HLM
model where the edge probability $\nicefrac{w_uw_v}{\rho}$ is replaced
with arbitrary values in $p_{uv} \in [0,1]$.  In this case, at each
time step the network is distributed over graphs like
$\mathcal{G}(P)$, the generic independent edge graph model with
parameter $P$.  Similarly, the evolution parameter $\alpha$ can be
generalized to a symmetric matrix $A \in [0,1]^{n\times n}$.  We note that several well studied models fall into
this framework, including the stochastic block model, stochastic Kronecker graphs~\cite{Leskovec2005, mahdian2007stochastic}, random dot
product graphs~\cite{young2008random, Young2008, Younga}, and the inhomogeneous random graph
model~\cite{Bollobas2007, Soederberg2002}.  In order to maintain consistent notation, we will specify all
of the experiments in this work in terms of this generalized HLM model
even though most of generative matrices $P$ come from the Chung-Lu
model.  Further, with the aim of having the minimum number of
free-parameters we will only consider HLM evolutions where
$A_{uv}= A_{xy}$ for all $u \neq v$ and $x \neq y$.  We will further
slightly abuse notation and refer to this common value as $\alpha$.

Finally, we note that this generalized framework can be further
expanded by allowing the parameter matrix $P$ to depend on the time
step $t$.  In particular, we have
\[ \prob{\set{u,v} \in G_{t+1} \mid G_t} = \begin{cases} (1-\alpha) +
    \alpha p^{(t+1)}_{uv} & \set{u,v} \in G_t \\
    \alpha p^{(t+1)} & \set{u,v} \notin G_t\end{cases}.\]
It is worth mentioning that in this case $G_{t+1}$ is not distributed
like $\mathcal{G}(P^{(t+1)})$ because of the possibility of edges
being present from earlier timesteps.  In fact, it is an easy
exercise to show that the edges of $G_t$ are distributed according
to
$(1-\alpha)^tP^{(0)} + \sum_{i=1}^t \alpha (1-\alpha)^{t-i}P^{(i)}.$

\subsection{Experimental Setup}
In our experimental setup, in keeping with the lightweight nature of the RH calculation, we focus on the detection of small anomalies in extremely sparse graphs, such as we observed in small time windows for the LANL data set and other proprietary network flow data.
For the sparse graphs we consider two different fixed degree distributions. Both of these degree distributions are formed by choosing 5000 samples from some fixed probability distribution.  For the first degree distribution, we estimate a degree density function using a smoothed median estimator from a selection of one minute graphs in the LANL network flow data set, see Figure \ref{F:degree_pl}.  The resulting degree density function results in a ``power-law'' like degree distribution with exponent approximately $3.5$.  Although the resulting degree density function is not truly a power-law distribution, we will abuse notation and refer to it as a ``power-law'' degree distribution.  For a discussion of the difficulties and appropriateness of the power-law degree distribution for real data the interested reader is referred to the recent work \cite{broido2018scale}.   The resulting distribution has 4,742 edges in expectation as well as maximum expected degree 961.    The second distribution represents what we call a ``bump power-law,'' that is, a power-law distribution coupled with an approximately binomially distributed ``bump'' at higher degrees.  This can be thought of as a more hub-and-spoke style network where the degree of the spoke vertices are approximately power-law distributed while the degree of the hub vertices are approximately binomially distributed.   For the bump power-law distribution, the degree probabilities were explicitly estimated from a collection of several thousand graphs generated from a proprietary enterprise boundary network flow data set  (see Figure \ref{F:degree_plb}).  This resulting distribution has 6,067 edges in expectation as well as maximum expected degree 327.\footnote{It is worth mentioning that both of these degree distributions violate the standard assumption for the Chung-Lu model $\max_v w_v \leq \sqrt{\rho}.$    To deal with this, we replace that edge probabilities of $\nicefrac{w_uw_v}{\rho}$ with $\min\set{1, \nicefrac{w_uw_v}{\rho}}$.  However, as there are under 200 pairs $\set{u,v}$ where $w_uw_v > \rho$ for each of the degree distributions, this makes a minimal difference in the model.}

\begin{figure}
  \centering
  \hfill
  \subfloat[][Power-Law\label{F:degree_pl}]{
    \includegraphics[trim = 18 33 71 76, clip, width = .4\textwidth]{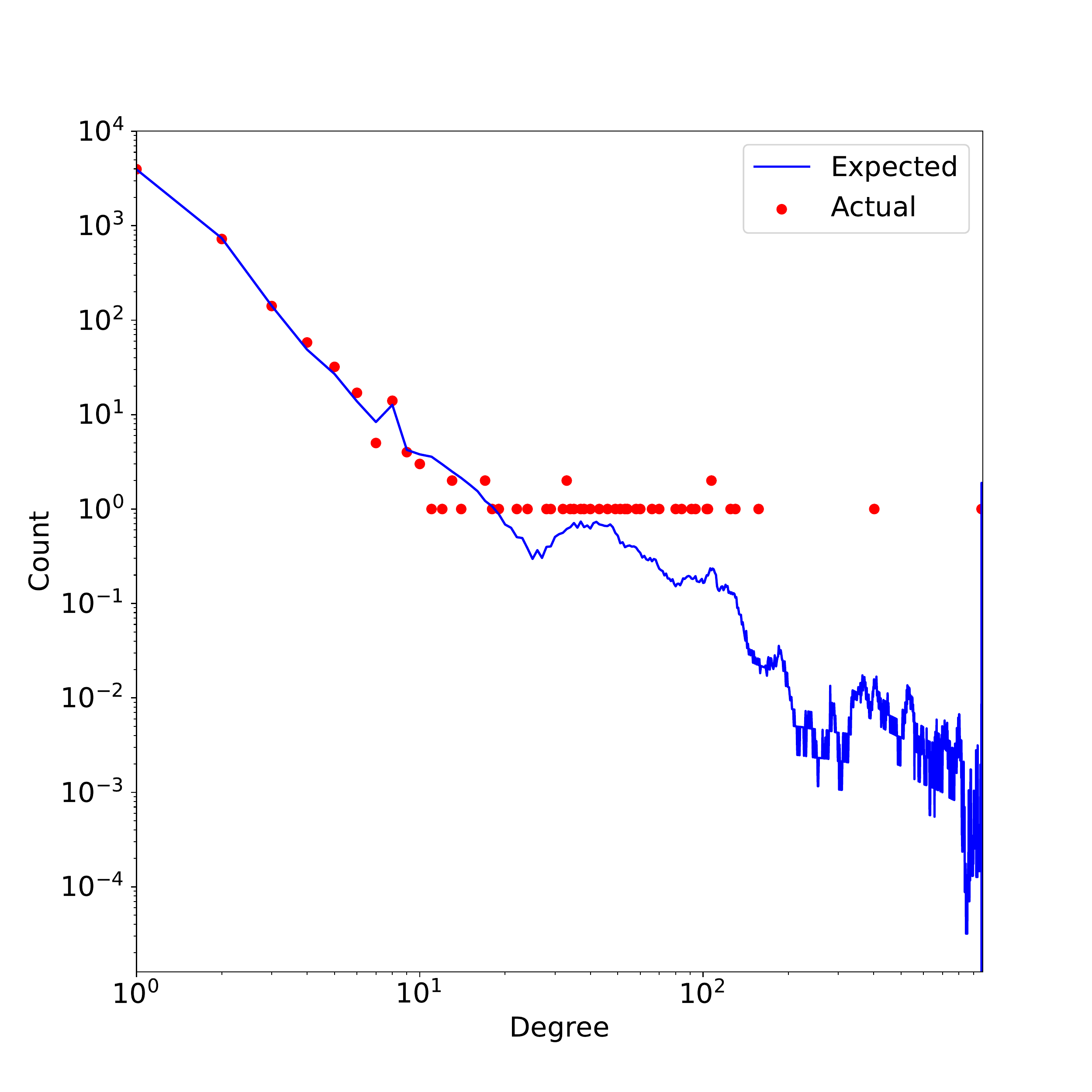}
  }
  \hfill
  \subfloat[][Bump Power-Law\label{F:degree_plb}]{
    \includegraphics[trim = 18 33 71 76, clip, width = .4\textwidth]{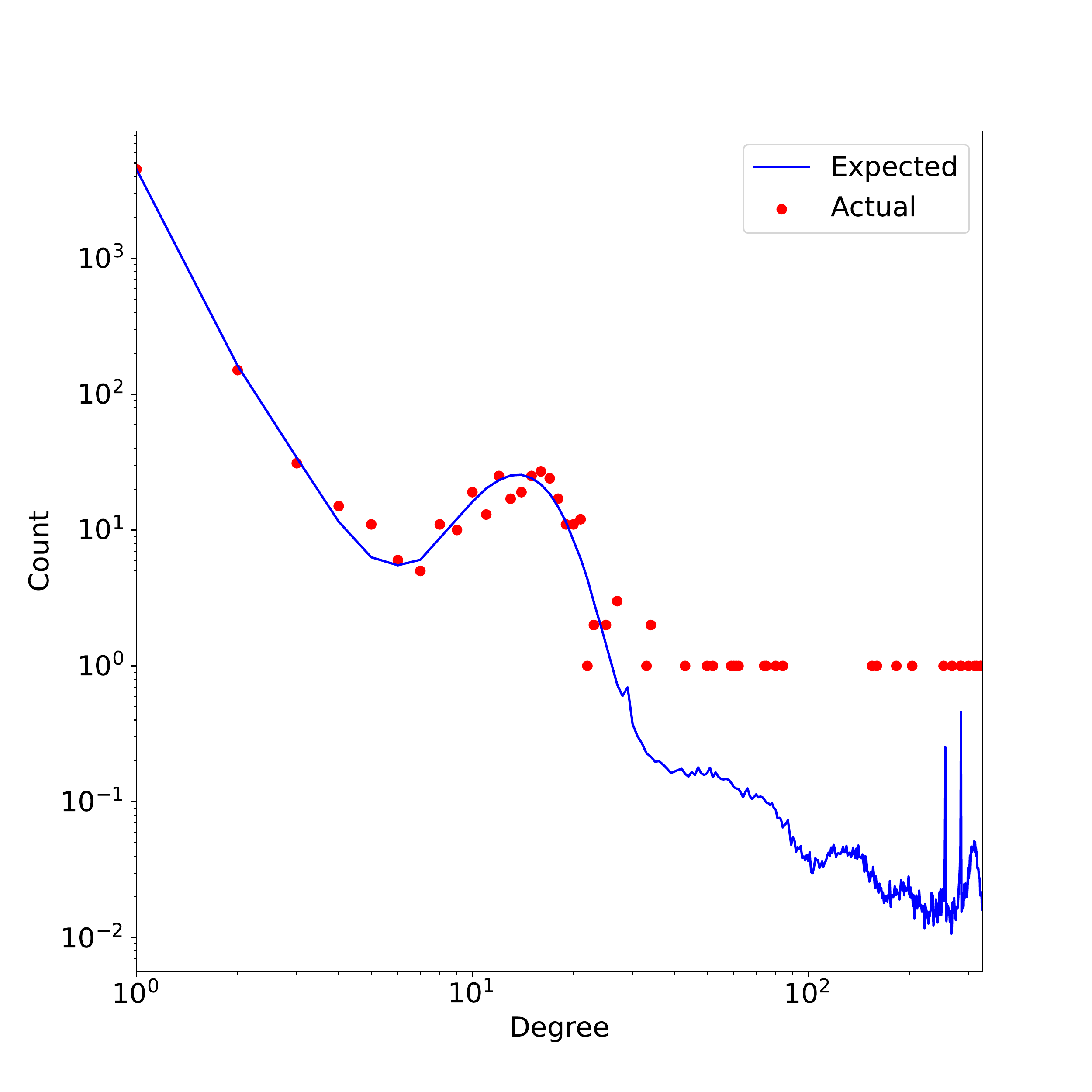}
  }
  \hfill\phantom{}

  \caption{$\log$-$\log$ Degree Distribution\label{F:degdist}}
  \end{figure}

We will also consider two different styles of anomalies involving between 10 and 50 edges.  The first anomaly involves three randomly chosen vertices adding some number of edges to the rest of the network uniformly at random.  We view this as behavior consistent with a probe or scan of the network structure.  For the second anomaly a random collection of vertices are chosen and a random spanning tree is added among those vertices.  We view this as behavior consistent with lateral movement scenario where an attacker is exploring the network by moving from machine to machine. They may backtrack and try different routes (thus a tree rather than just a path) as needed.

For each of these 420 scenarios (two different degree distributions, two different anomaly types, five different anomaly sizes, and 21 different values of $\alpha$) we produce 1,000 different pairs of graphs $(G,G')$ where $G$ is a random instance of the Chung-Lu model with the chosen degree distribution and $G'$ is formed from $G$ by performing one step of the HLM evolution with the chosen parameter $\alpha$, and then adding a random instance of the chosen anomaly of the chosen size. In this way, the anomaly occurs concurrently with the natural evolution of the network, as might be typical of real-world data.
For each of our 420 scenarios, and 1,000 pairs of graphs within the scenario, we compute the RH distance between $G$ and $G'$ to get a distribution of RH distances for the anomalous transition.

\subsection{Anomalous versus Nominal Relative Hausdorff Distance}
In this section we consider whether the anomalous transitions in the HLM model result in a different distribution of RH distances than a nominal transition.  To this end, for each degree distribution and choice of $\alpha$, we simulate 10,000 different HLM transitions to develop a baseline distribution of RH distances, see Figure \ref{F:RHbaseline}.  For each of the 420 anomalous scenarios we calculate the 2-sample Kolmogorov-Smirnov $p$-value~\cite{Young1977} between the previously calculated anomalous distribution and this baseline distribution.  For each of the 420 different anomaly scenarios the KS test significance value is less that $0.01$, indicating that we can reject the null hypothesis that the distribution of RH distances for an anomalous HLM transition is the same as the distribution for non-anomalous transitions.  In particular, this means that in a statistical sense the RH distance is able to pick up on anomalous evolution of the degree distribution, even when the anomaly only consists of 10 edges.\footnote{It is important to note that this is not always the case.  For example, in an experiment that is not reported for space limitations, we synthetically generated a degree distribution with a power-law exponent of 4 and average degree around 1.4, the anomalies resulted in a range of KS statistics including several scenarios which were not statistically distinguishable.  }  In the next subsection we will consider the effectiveness of the RH distance in detecting anomalous behavior directly, rather than statistically.

\begin{figure}
  \centering
  \hfill
  \subfloat[][Power-Law\label{F:RH_PLbaseline}]{
    \includegraphics[trim = 37 21 71 68, clip, width = .45\textwidth]{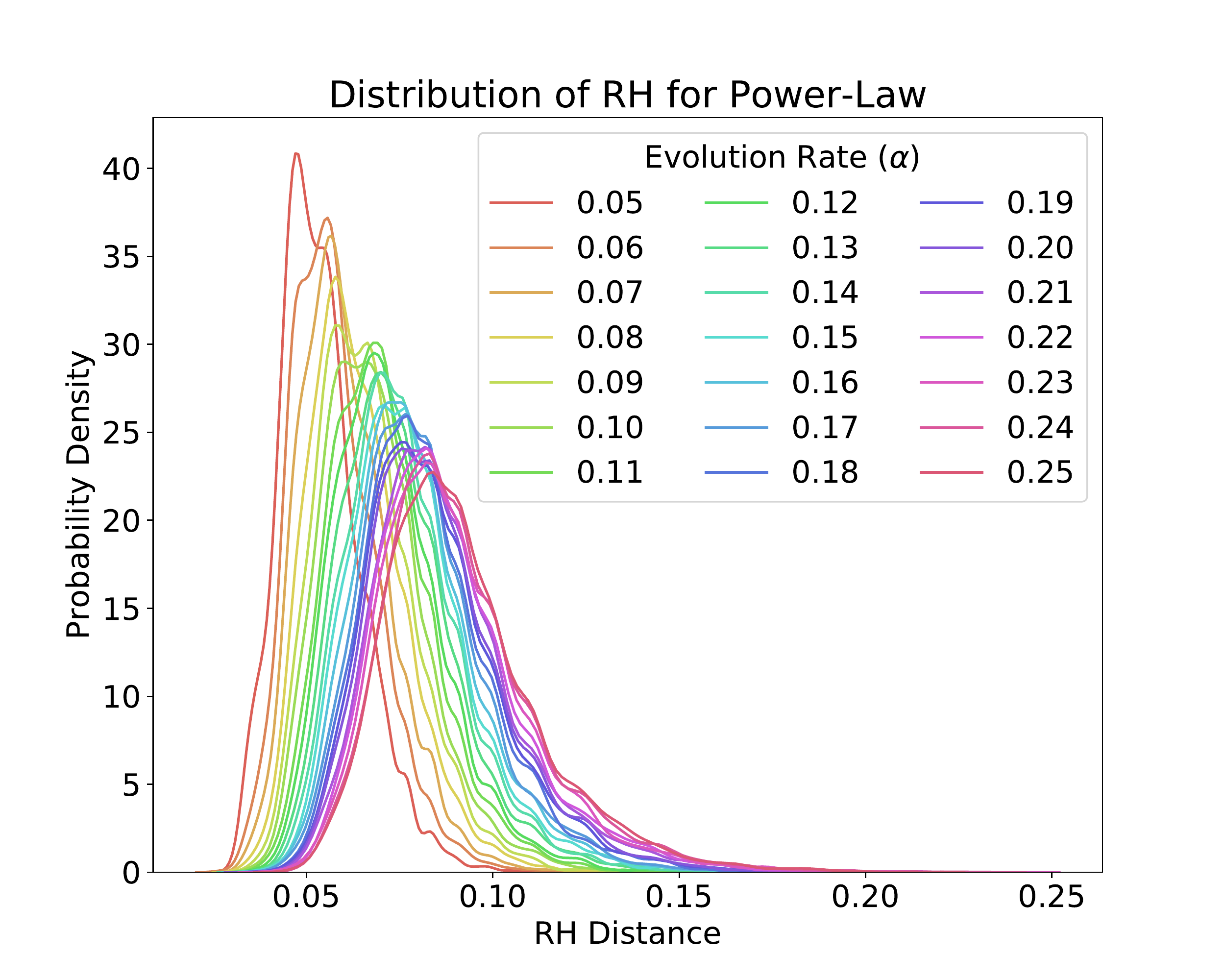}

  }
  \hfill
  \subfloat[][Bump Power-Law\label{F:RH_BumpPLbaseline}]{
    \includegraphics[trim = 37 21 71 68, clip, width = .45\textwidth]{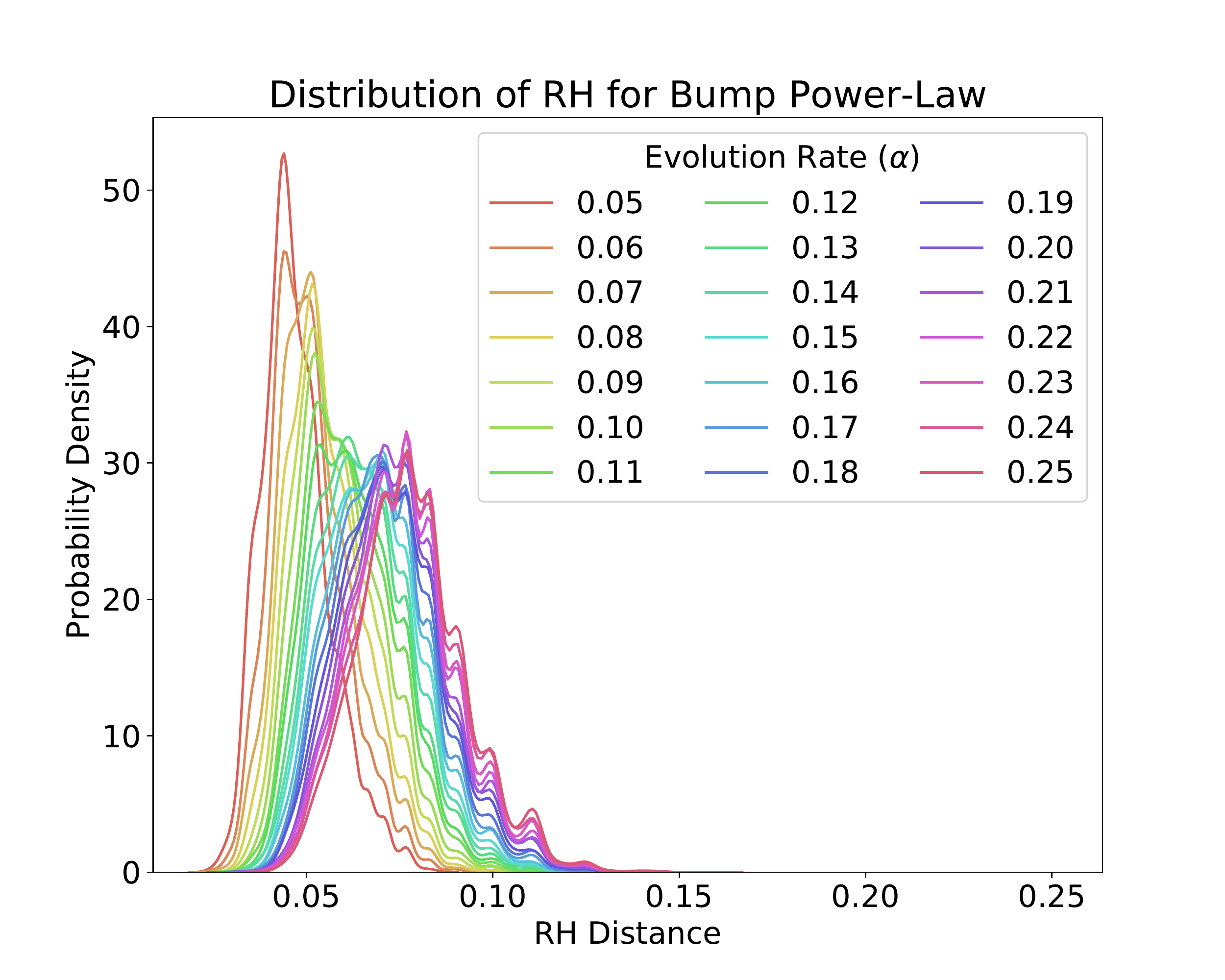}

  }
\hfill \phantom{}

  \caption{Distribution of RH distance under HLM evolution\label{F:RHbaseline}}
\end{figure}

\subsection{Anomaly Detection}
In this section we consider how RH distance could be used to detect anomalous behavior in a streaming environment and compare to the performance with a similarly lightweight measure (KS distance) as well as a ``ideal'' measure (graph edit distance).   To compare between these three methods in a non-parametric way (i.e. without introducing a ``anomaly threshold'') we introduce the  idea of an \emph{anomaly score} of an observation with respect to a theoretically or empirically observed baseline
distribution.  Specifically, let the random variable $Z$ have
theoretical or empirical cumulative distribution function $f_Z \colon
\R \rightarrow [0,1]$.  We will then say that a particular observation $z$ (not necessarily
distributed as $Z$) has an anomaly score relative to $f_Z$ of
$2\abs{f_Z(z) - \frac{1}{2}}$.  Note that this score takes on values
from $[0,1]$ with values closer to one being more ``anomalous.'' This score can be thought of as measuring
the deviation of the observation $z$
from the bulk of the distribution of $Z$.

Before turning to a direct comparison between anomaly scores for RH distance,  KS $p$-values\footnote{From this point on in this work, although we will be using the KS $p$-value we will be treating it simply as a distance measure rather than a statistical quantity.  In particular, we will make no assumptions about the meaning of large or small values of the $p$-value other than as a means of measuring the ``closeness'' between two degree distributions.}, and edit distance, we consider the performance of each of these anomaly scores in isolation via ROC-like curves, presented for a subset of our 20 scenarios (2 distributions, 2 anomaly types, 5 levels of each anomaly) in Figures \ref{F:ROC_PLPS10}, \ref{F:ROC_PLEX50}, \ref{F:ROC_BumpPLEX10}, and \ref{F:ROC_BumpPLPS50}.  Note that as the relative frequency of anomalous and non-anomalous behavior is unknown, these are not truly ROC curves but rather implicit plots $(x(t),y(t))$ where $t$ is some threshold value.  Specifically, $y(t)$ is the fraction of the anomalous transitions that have anomaly score at least $t$, i.e. ``true positives'', where $x(t)$ is the fraction of non-anomalous transitions that have anomaly score at least $t$, i.e. ``false positives''.  At this point it is worth pointing out that if $z$ is identically distributed with the random variable $Z$, then the anomaly score for $z$ is uniformly distributed over $[0,1]$.  As a consequence, we can explicitly define $x(t) = 1-t$.
To compute the ROC curve for one scenario we used the previously computed cumulative distribution function for the 10,000 non-anomalous transitions as $f_Z$.
Then, for each of the 1,000 anomalous transitions we use the RH distance as $z$ and compute the anomaly score for that value in the context of $f_Z$.

\begin{figure}
  \centering
  \hfill
  \subfloat[][Kolmogorov-Smirnov]{\includegraphics[trim = 30 35 70 85, clip, width = .3\textwidth]{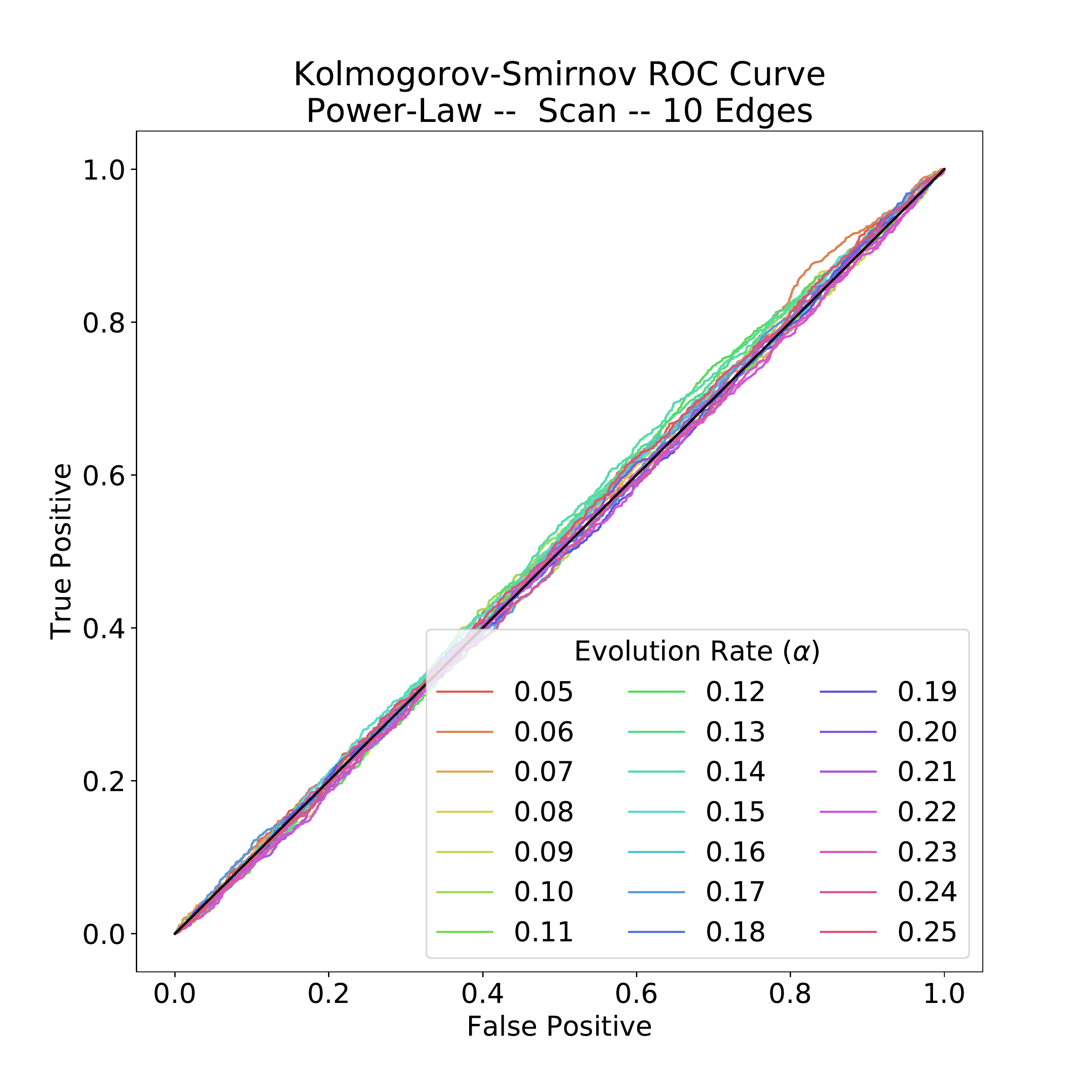}}
  \hfill
  \subfloat[][Relative Hausdorff]{\includegraphics[trim = 30 35 70 85, clip, width = .3\textwidth]{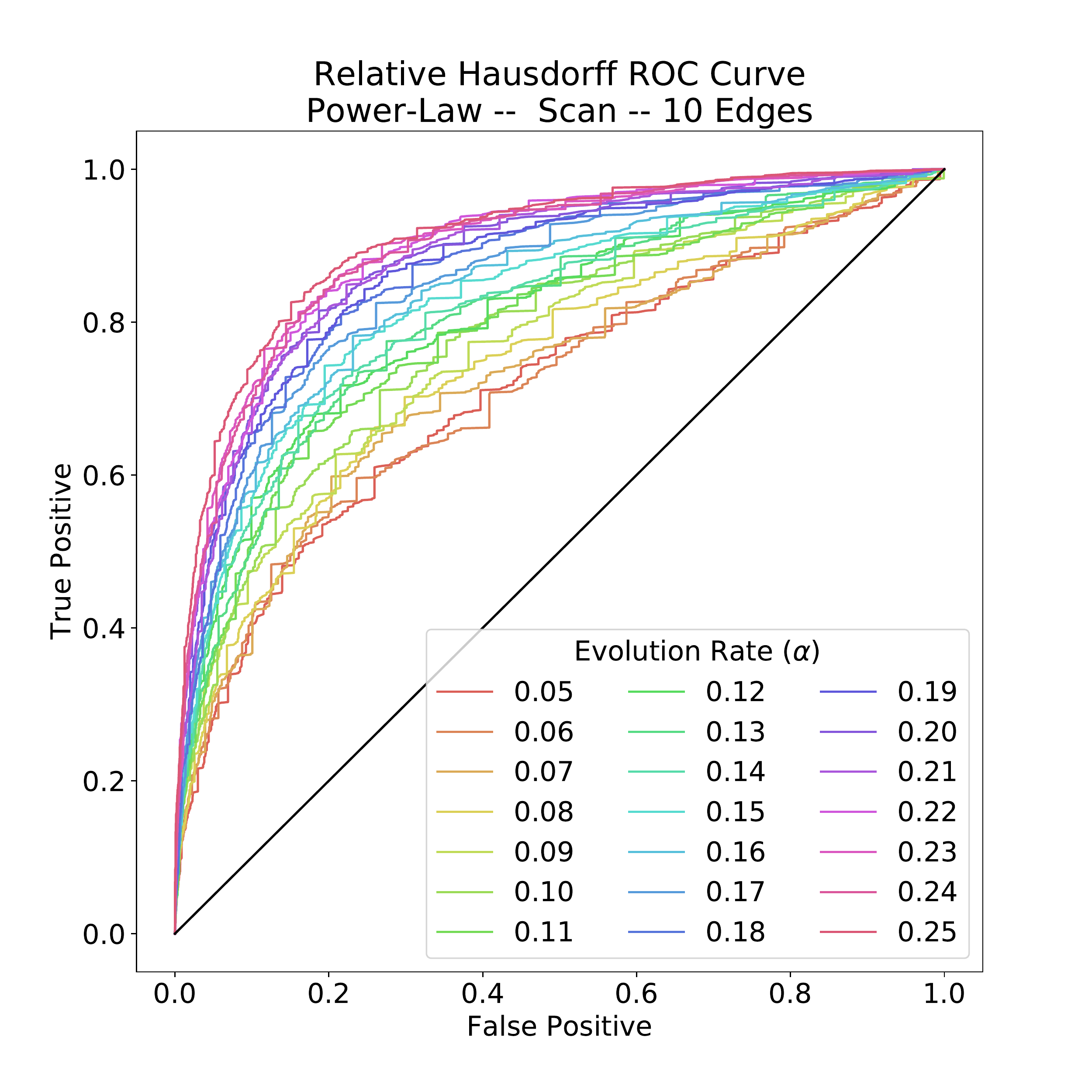}}
    \hfill
\subfloat[][Edit]{\includegraphics[trim = 30 35 70 85, clip, width = .3\textwidth]{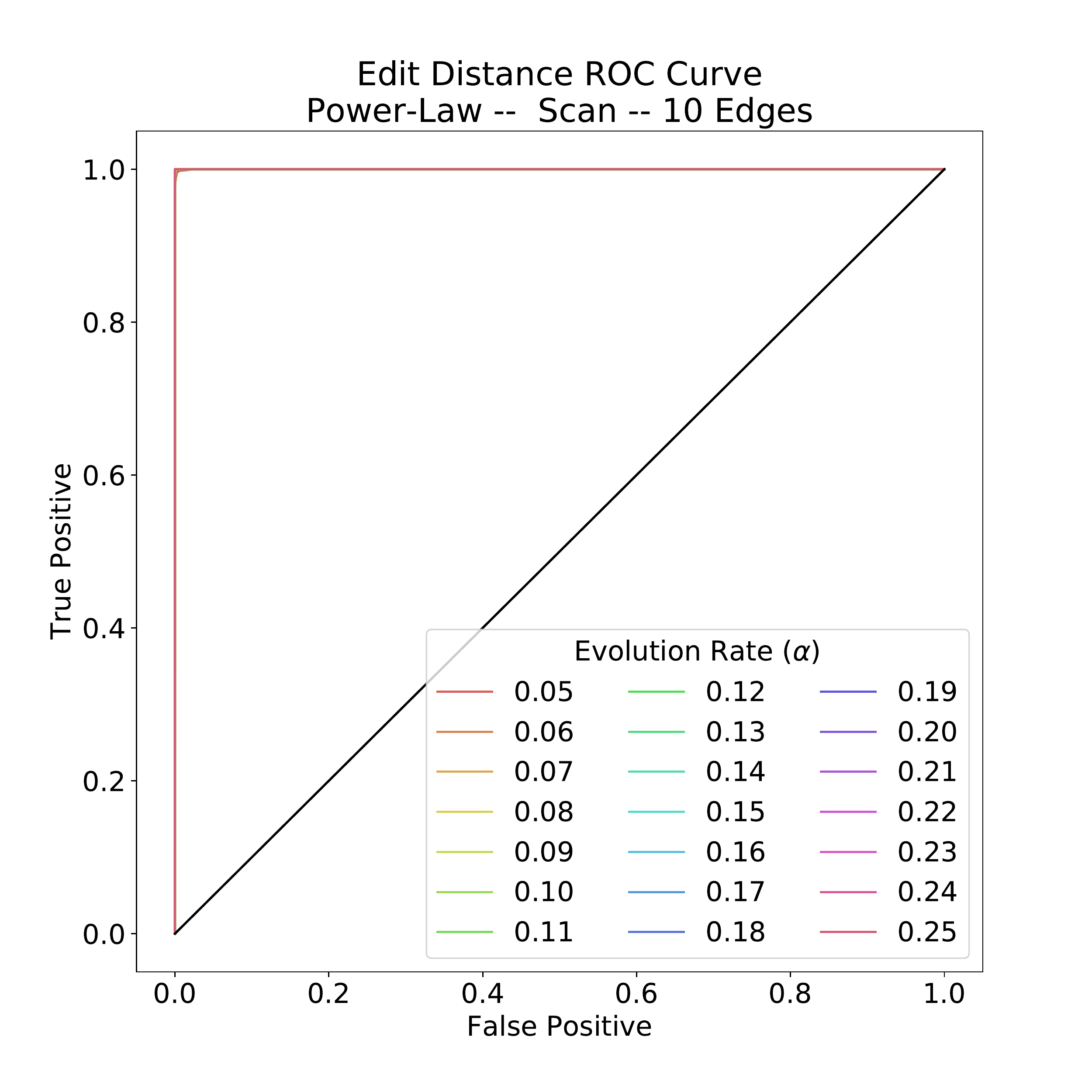}}
  \hfill \phantom{}
  \caption{Power-Law Distribution, Scan,  10 edges} \label{F:ROC_PLPS10}
\end{figure}

\begin{figure}
  \centering
  \hfill
  \subfloat[][Kolmogorov-Smirnov]{\includegraphics[trim = 30 35 70 85, clip, width = .3\textwidth]{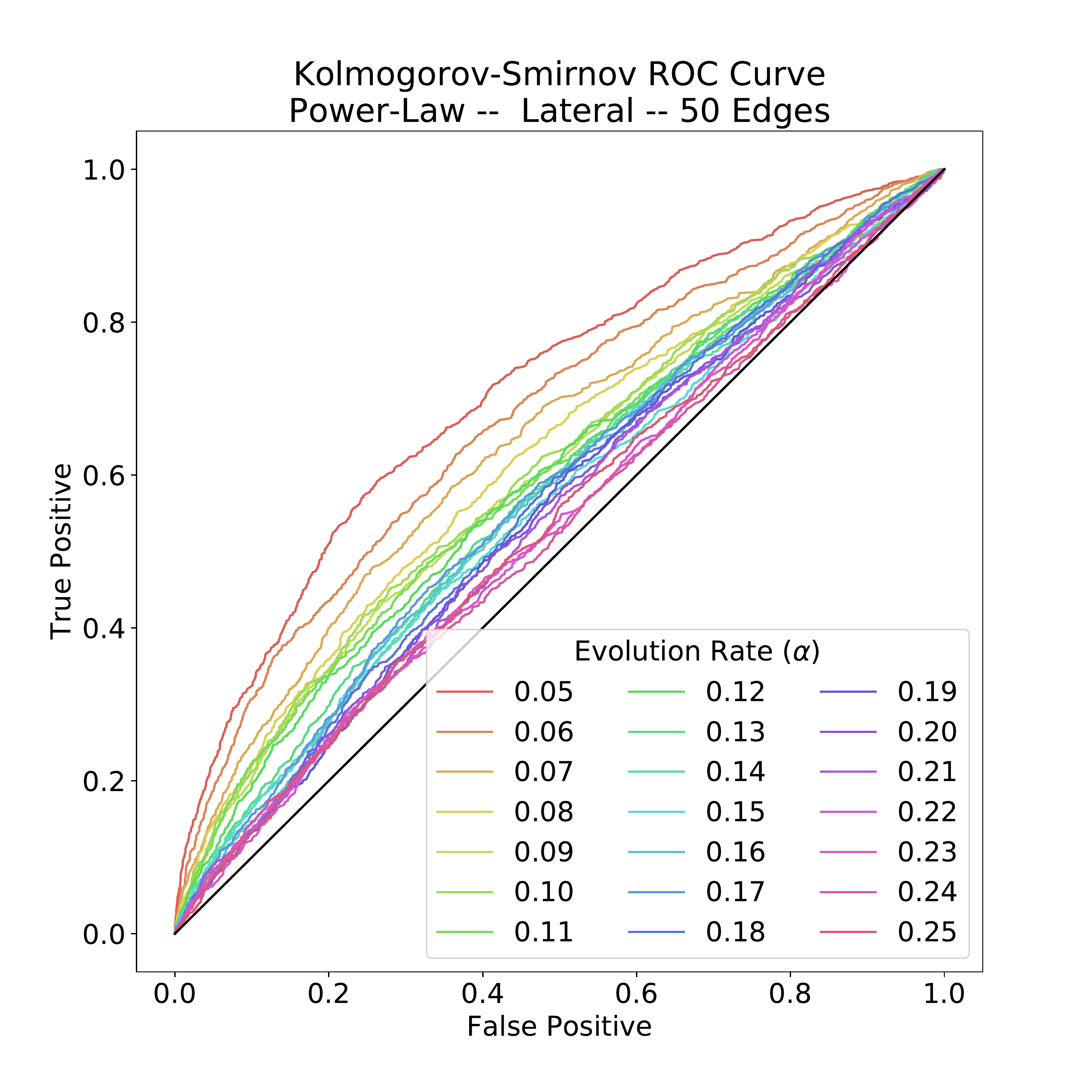}}
  \hfill
  \subfloat[][Relative Hausdorff]{\includegraphics[trim = 30 35 70 85, clip, width = .3\textwidth]{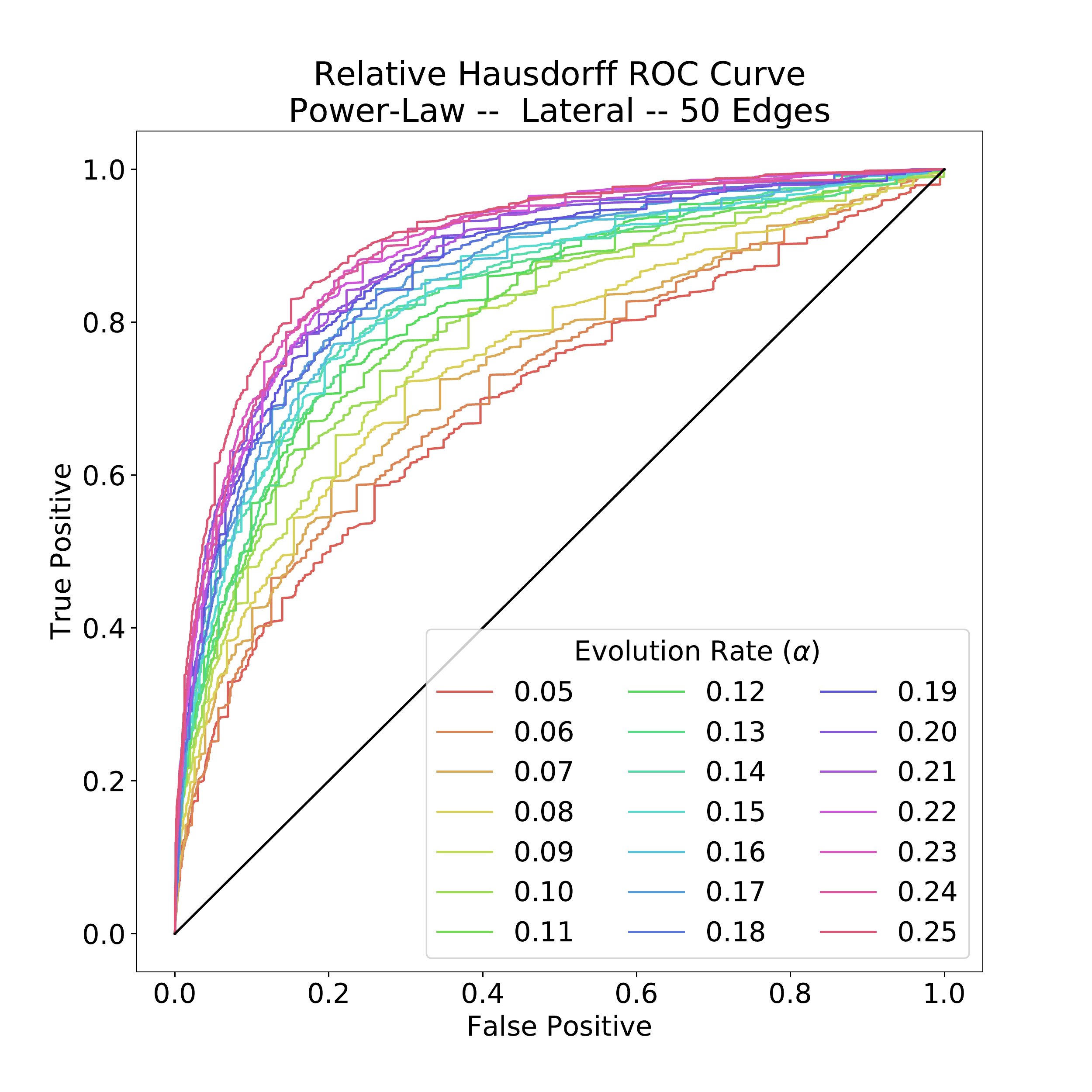}}
    \hfill
\subfloat[][Edit]{\includegraphics[trim = 30 35 70 85, clip, width = .3\textwidth]{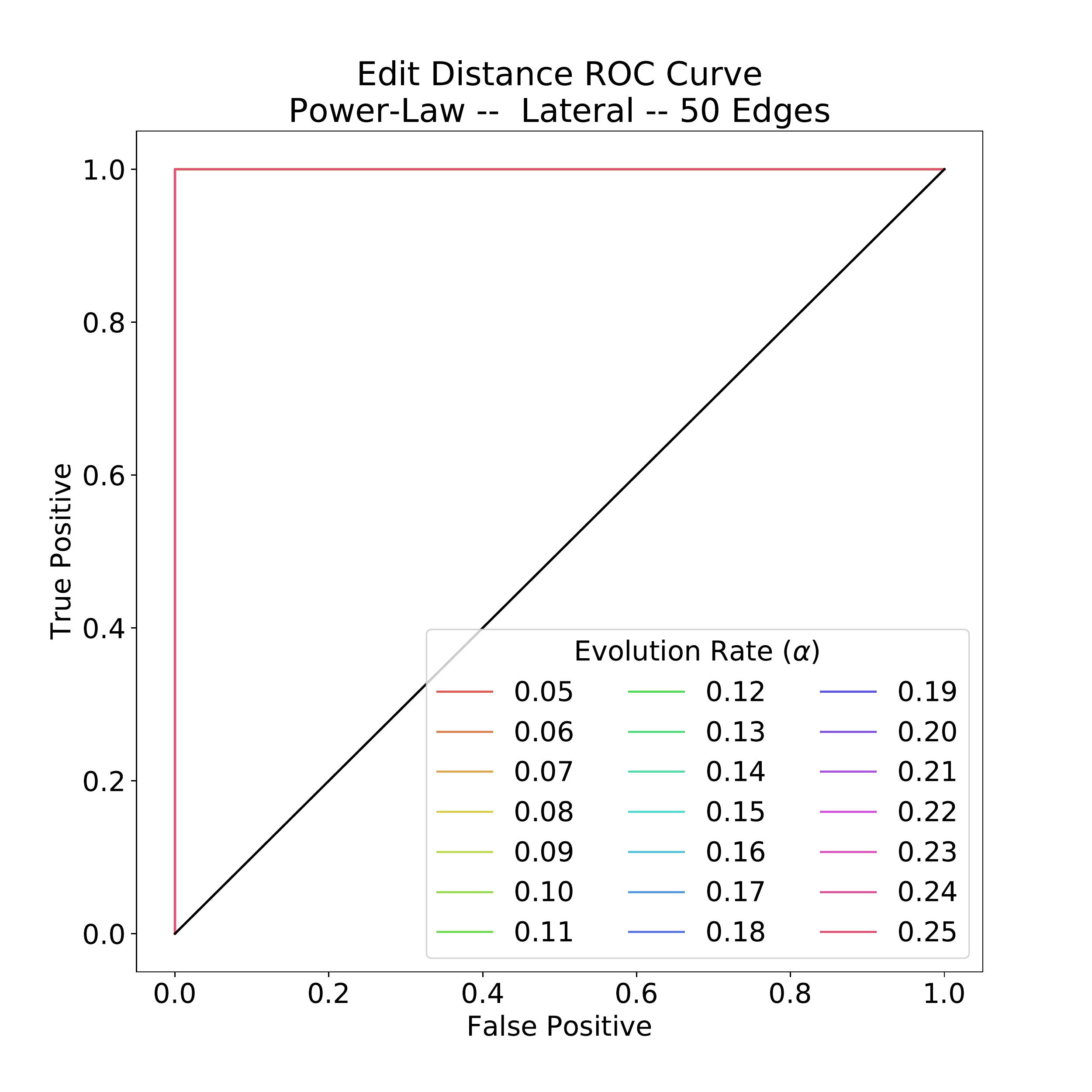}}
  \hfill \phantom{}
  \caption{Power-Law, Lateral Movement, 50 edges} \label{F:ROC_PLEX50}
\end{figure}

\begin{figure}
  \centering
  \hfill
  \subfloat[][Kolmogorov-Smirnov]{\includegraphics[trim = 30 35 70 85, clip, width = .3\textwidth]{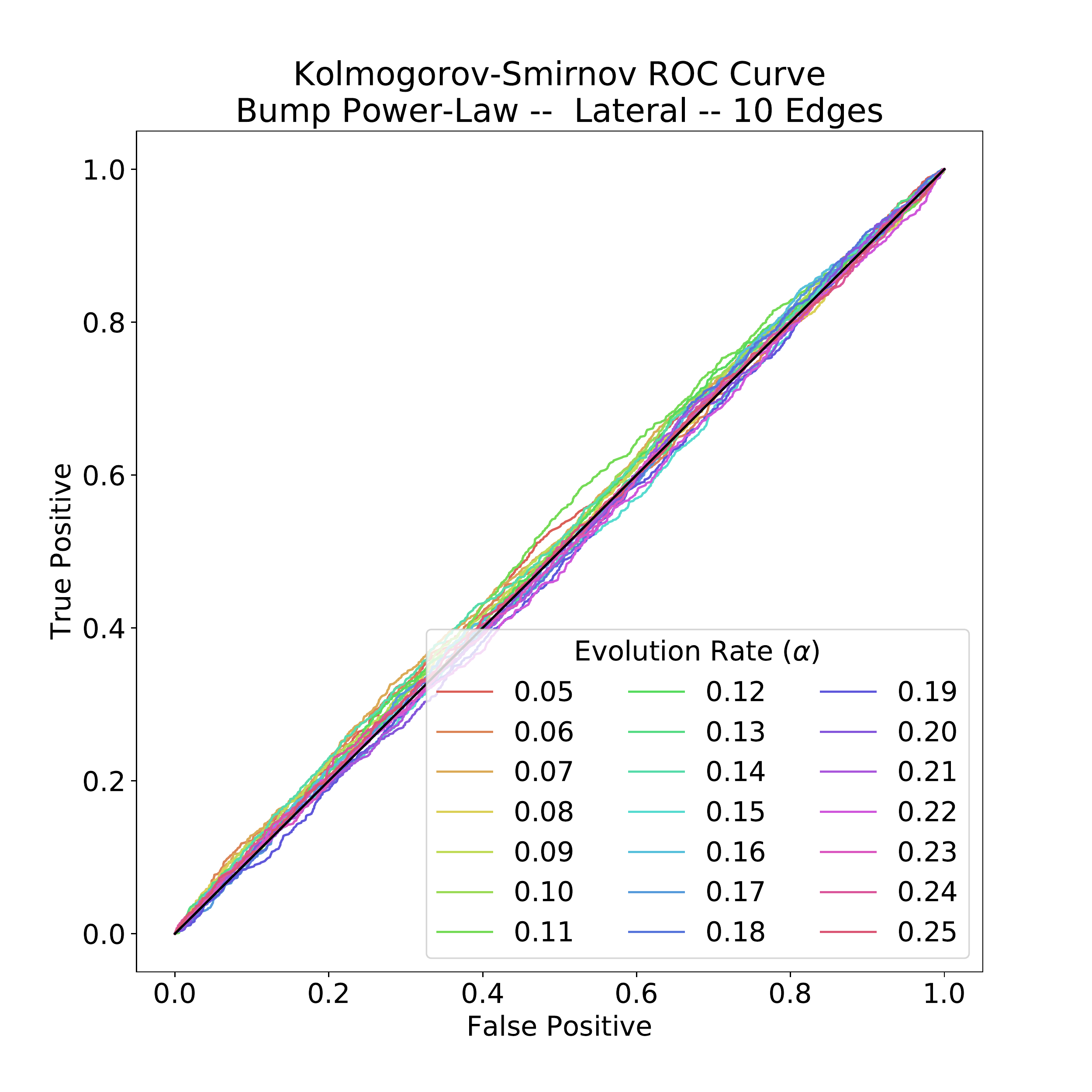}}
  \hfill
  \subfloat[][Relative Hausdorff]{\includegraphics[trim = 30 35 70 85, clip, width = .3\textwidth]{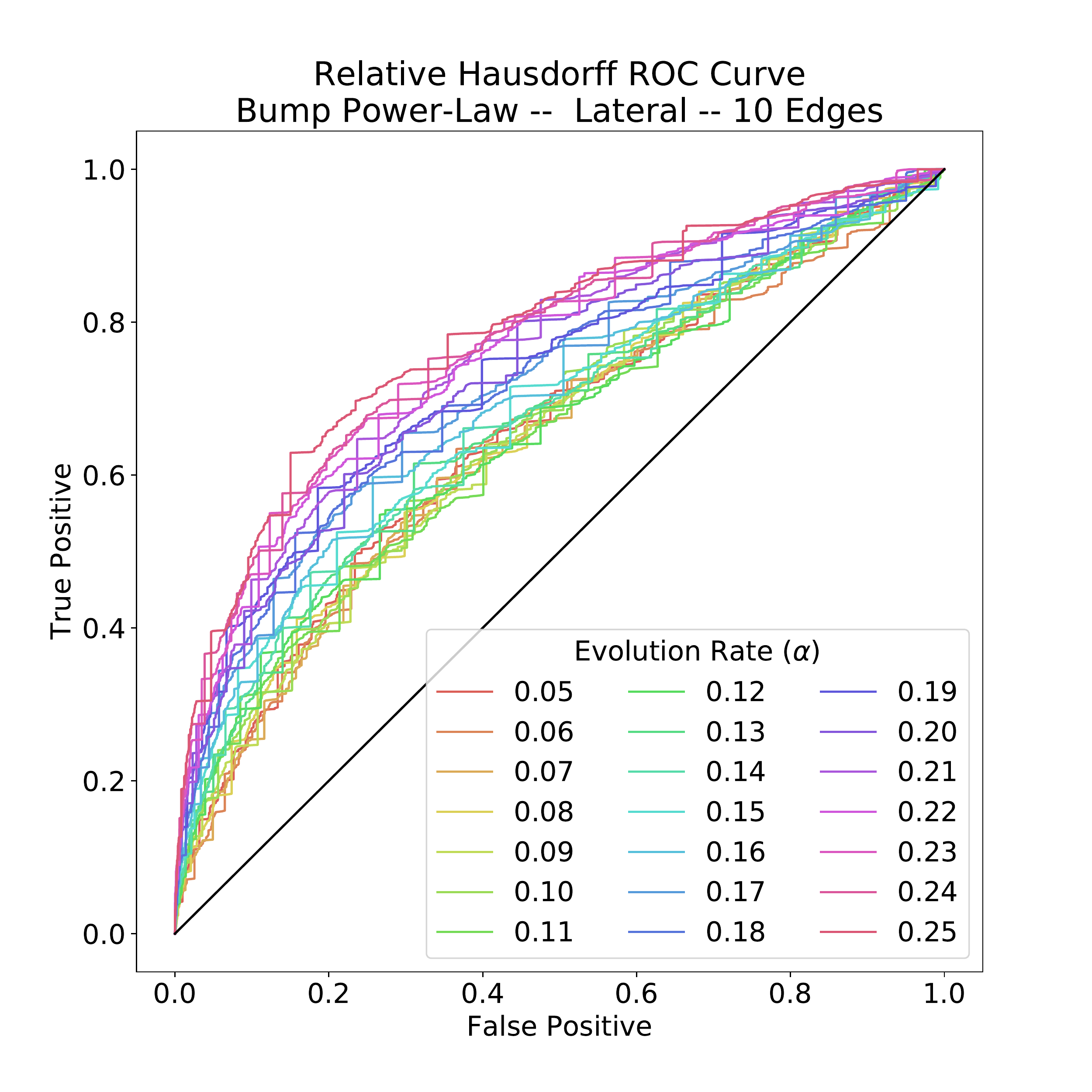}}
    \hfill
\subfloat[][Edit]{\includegraphics[trim = 25 40 60 85, clip, width = .3\textwidth]{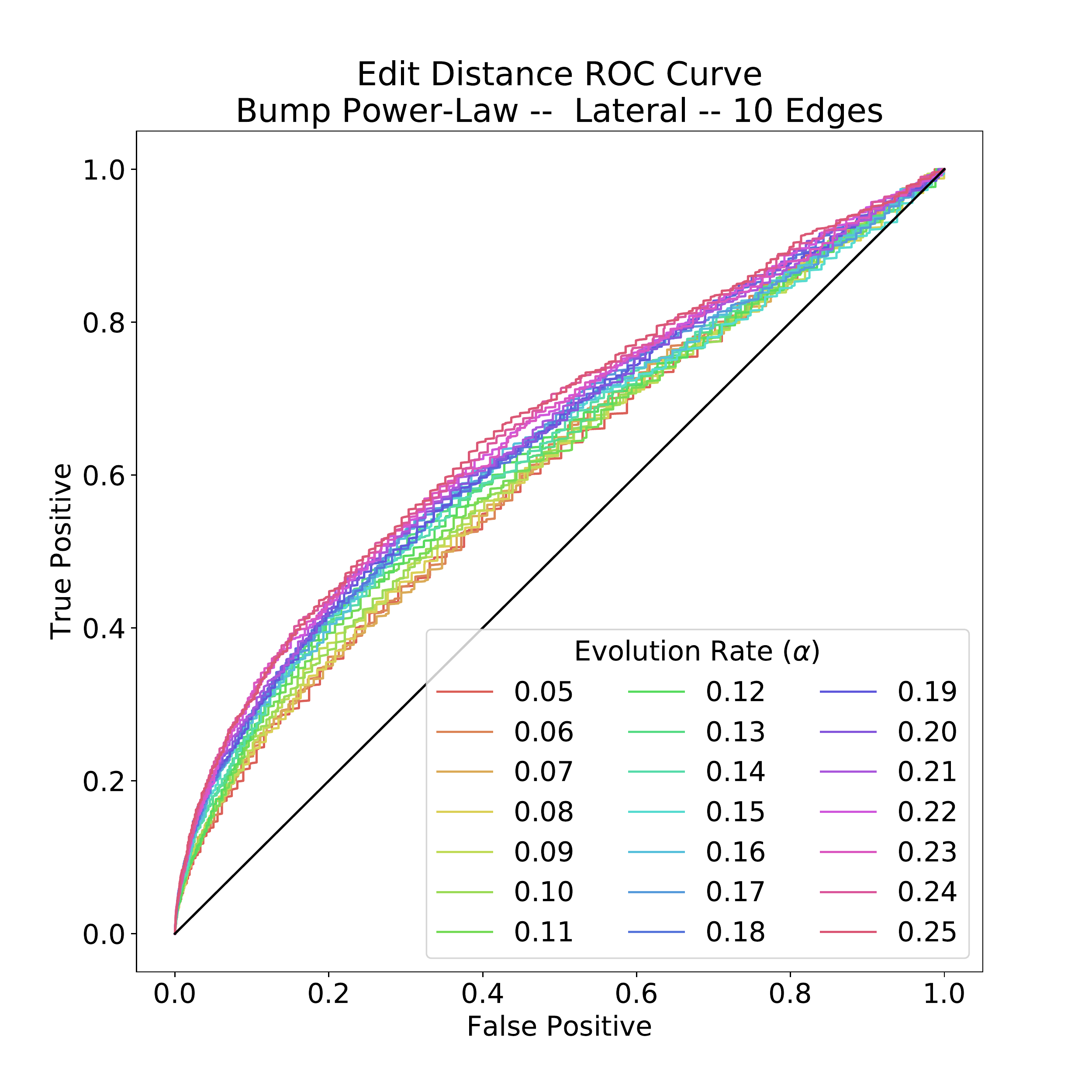}}
  \hfill \phantom{}
  \caption{Bump Power-Law, Lateral Movement, 10 edges} \label{F:ROC_BumpPLEX10}
\end{figure}

\begin{figure}
  \centering
  \hfill
  \subfloat[][Kolmogorov-Smirnov]{\includegraphics[trim = 30 35 70 85, clip, width = .3\textwidth]{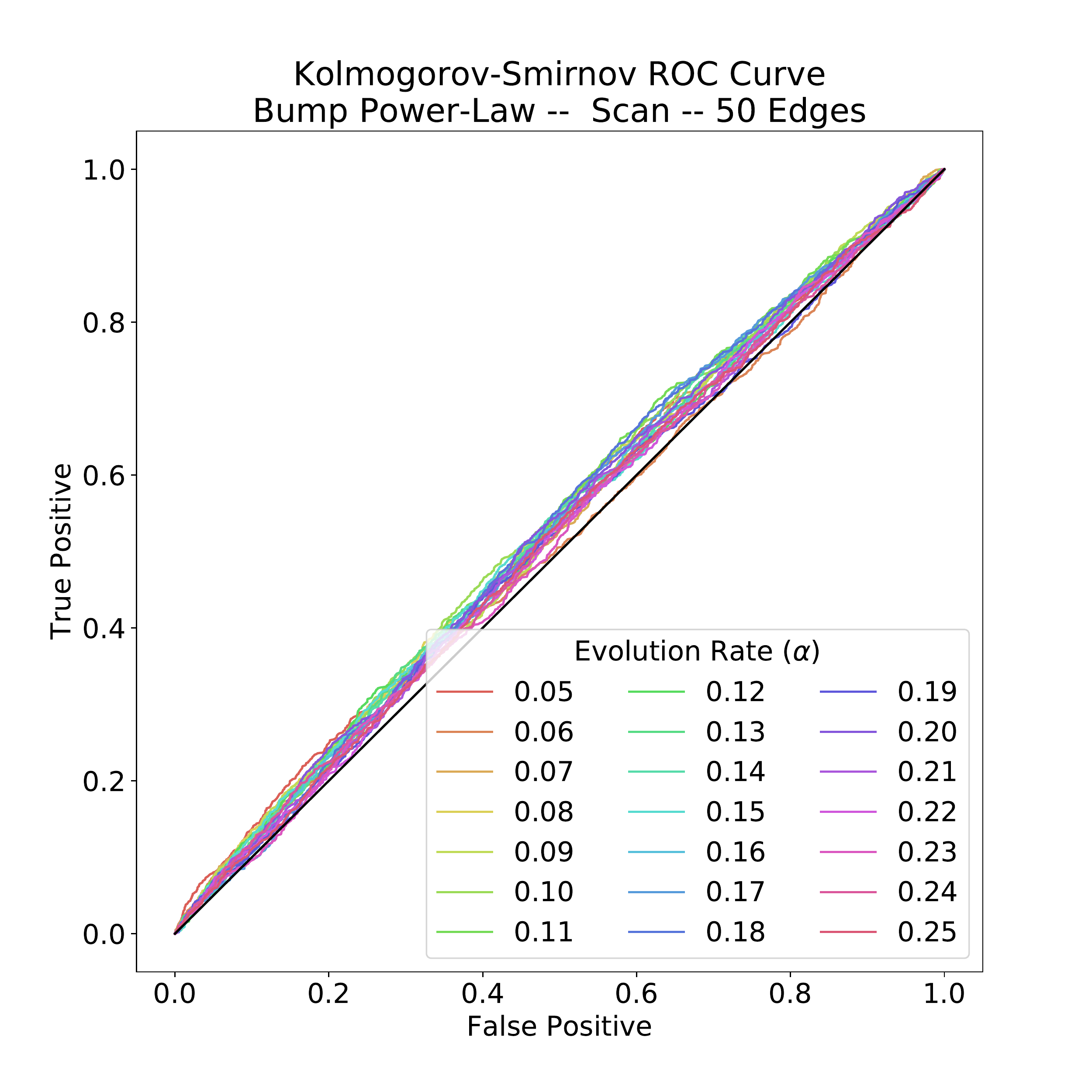}}
  \hfill
  \subfloat[][Relative Hausdorff]{\includegraphics[trim = 30 35 70 85, clip, width = .3\textwidth]{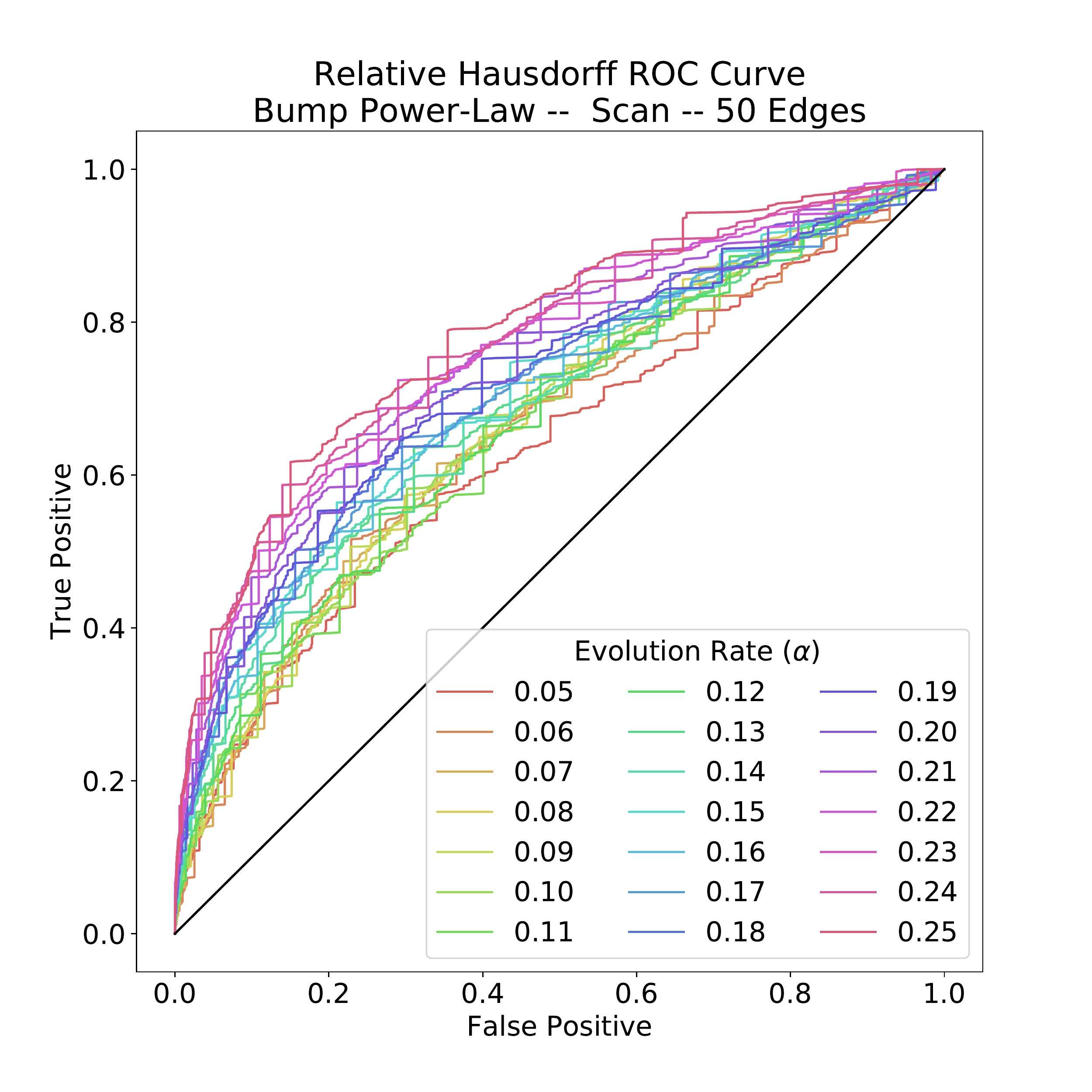}}
    \hfill
\subfloat[][Edit]{\includegraphics[trim = 30 35 70 85, clip, width = .3\textwidth]{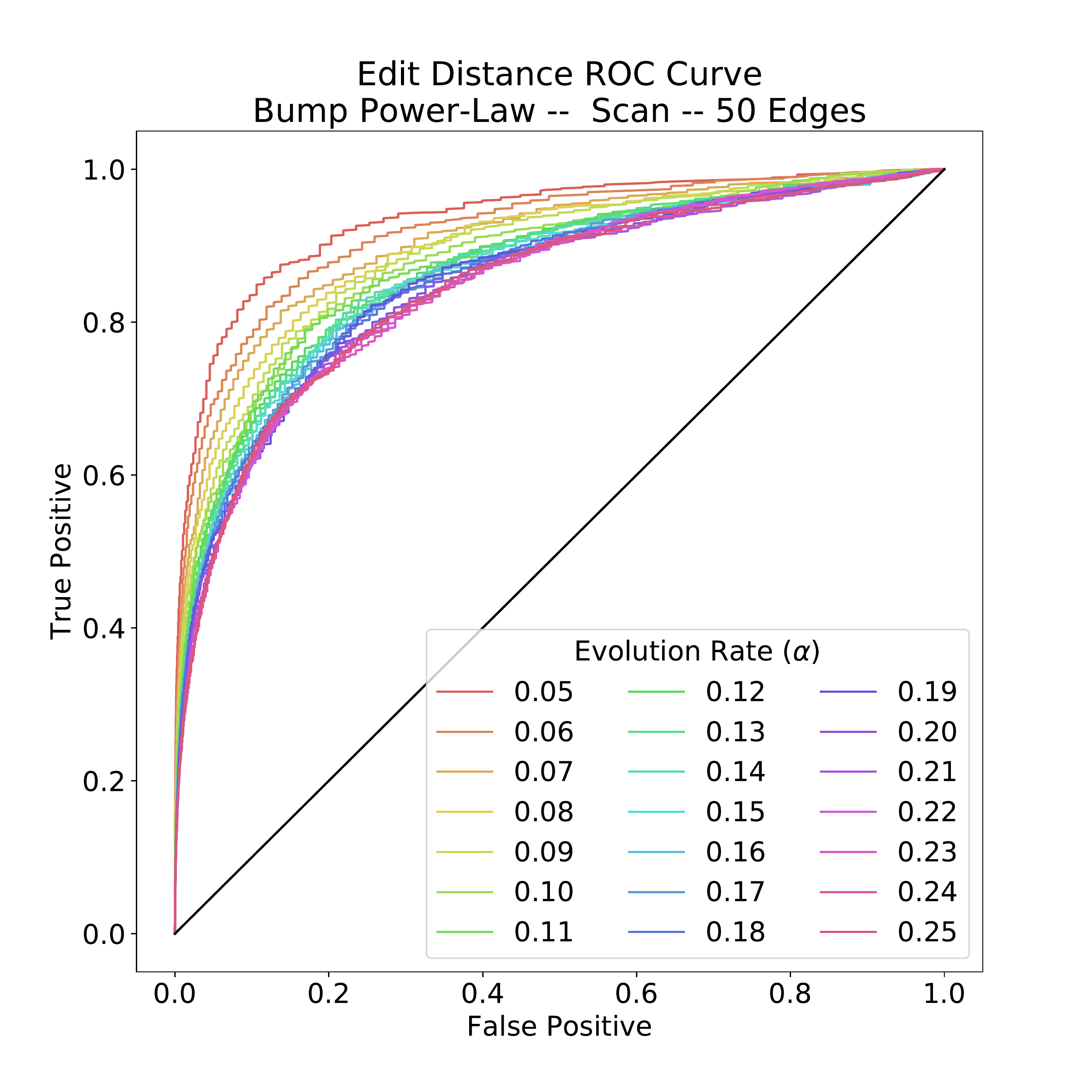}}
  \hfill \phantom{}
  \caption{Bump Power-Law, Scan, 50 edges} \label{F:ROC_BumpPLPS50}
\end{figure}

Overall we can see that, for detecting anomalies, edit distance would be preferred to RH distance, which would in turn be preferred to the KS statistic.  However, for the bump power-law, under the lateral movement anomaly with 10 edges, we see that the RH distance outperforms the edit distance, see Figure \ref{F:ROC_BumpPLEX10}.

\subsubsection{Kolmogorov-Smirnov (KS)}
For this section we will compare the anomaly score of the RH distance with the anomaly score of the KS $p$-value (significance value) between successive degree distributions both for the 420 anomalous scenarios and the 40 baseline distributions.  We note that since the degree distributions are discrete valued, the application of the KS test for hypothesis testing is not necessarily appropriate, however as we are interested in statistical behavior of the significance test and KS is widely used in the network analysis literature (see, for instance, \cite{aliakbary2014quantification, broido2018scale, Simpson2015}) we will ignore these technical issues.

Figure \ref{F:KSH2H} gives the relative performance of the KS and RH anomaly scores across all 420 anomaly scenarios. The $y$ value counts how many of the 1,000 cases RH outperforms KS. We can see that the RH distance outperforms KS the most for the scenario where there is a 50-edge scan anomaly on the  power-law distribution with an evolution rate of $0.23$.  In this case, the RH anomaly score is larger in 924 of the 1,000 different trials.  We see that overall, excepting cases with a low-evolution rate and larger, lateral movement anomalies, RH distance is clearly superior, especially for the power-law degree distribution.  It is worth noting that relative performance of the KS statistic improves when considering the lateral movement anomaly rather than the scan anomaly.  Since the degree change caused by lateral movement is spread across many vertices (as opposed to scan where the primary change is spread across only three vertices), this result can be explained by the well known sensitivity of the KS test to variation away from the a tails of the distribution~\cite{Simpson2015}. 

\begin{figure}
  \centering
\hfill
 \subfloat[][RH versus KS\label{F:KSH2H}]{
   \includegraphics[trim = 47 29 69 85 , clip, width = .4\textwidth]{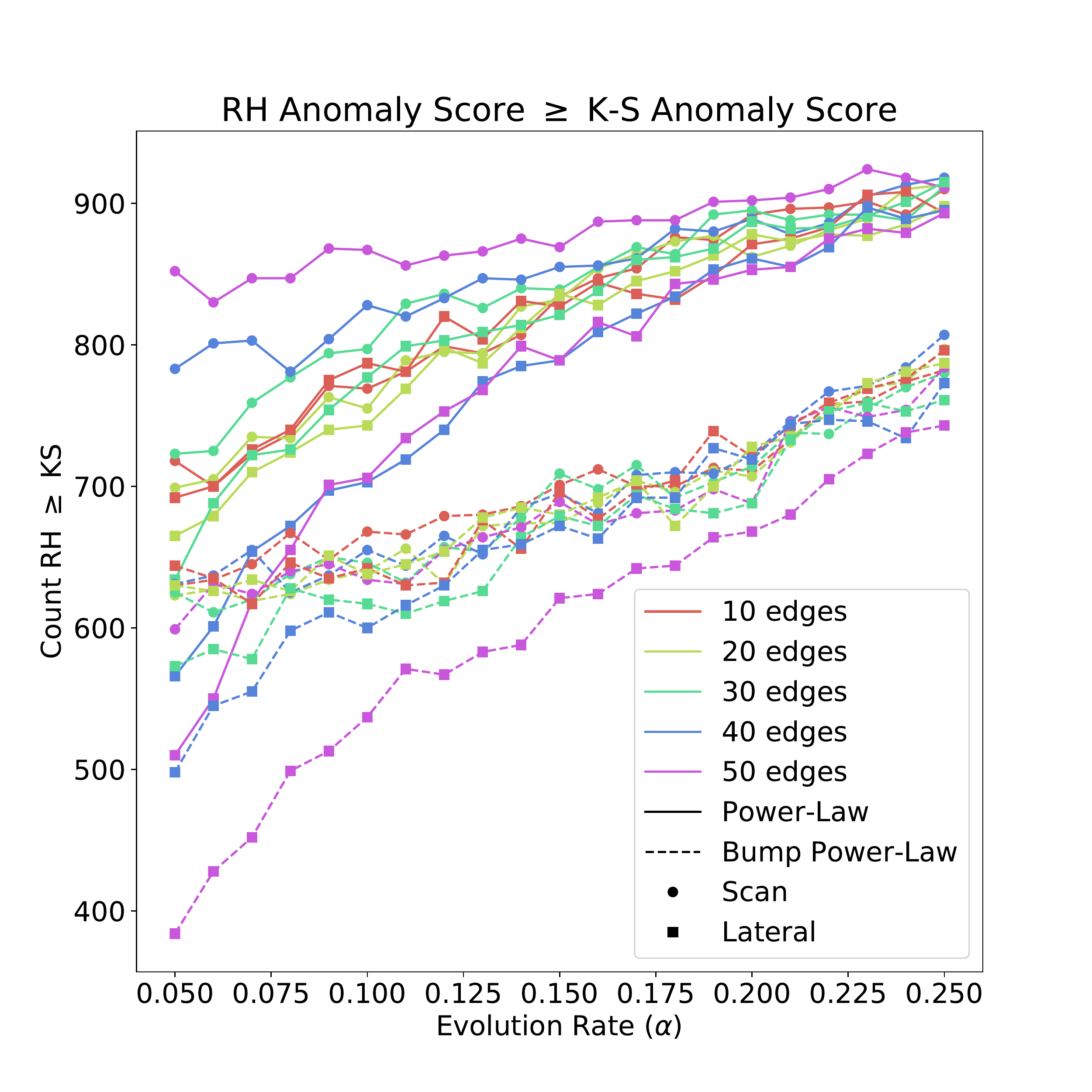}}
  \hfill
  \subfloat[][RH versus Edit\label{F:EditH2H}]{
    \includegraphics[trim = 47 29 69 85 , clip, width = .4\textwidth]{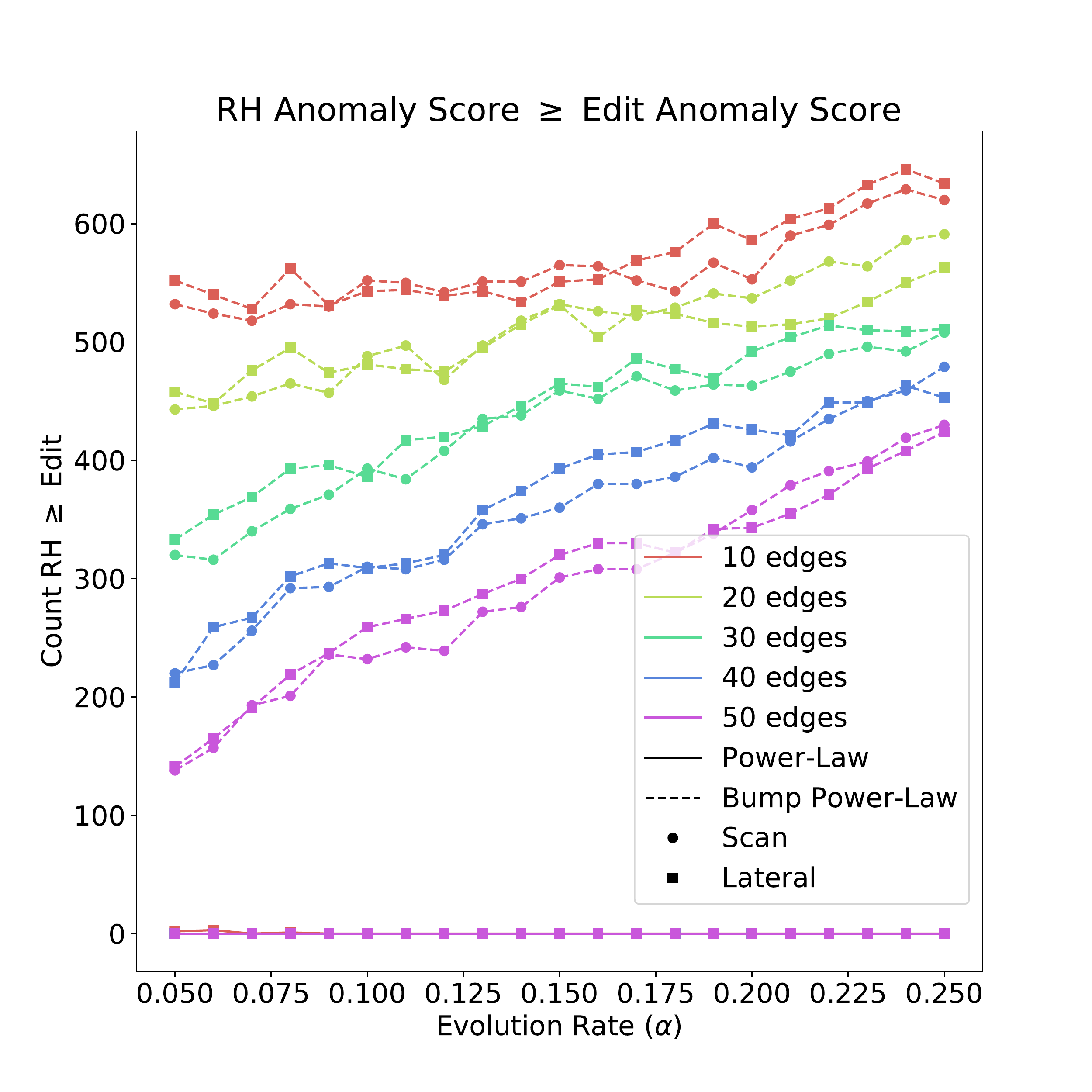}}
  \hfill\phantom{}
  \caption{Relative Performance of RH and KS Anomaly Scores}\label{F:KSH2H}

\end{figure}

We note that a direct binary comparison between the two measures may not tell the whole story of their relative performance.  For instance, in an extreme case, one can imagine one of the two measures taking on a fixed large value (indicating an anomaly) while the other takes on both small values and, more frequently, a value that is slightly larger than the other anomaly score. We separate the anomalous pairs of graphs into two sets according to whether their RH or KS anomaly score is higher. 
Then, for each of these two classes we report in Figure \ref{F:KS_H2Hrange} (across all 420 anomaly scenarios) the mean difference between the scores with error bars representing one standard deviation of range around this mean.
We note that the RH anomaly score typically exceeds the KS anomaly score by about 0.4, while the KS anomaly score typically exceeds the RH anomaly score by between $0.2$ and $0.3$.  Further, the standard deviation across all cases the average gap between the RH and KS anomaly score is fairly consistently  in the range $[0.2,0.3]$ essentially independent of all parameters.  Together, the data in Figures \ref{F:KSH2H} and \ref{F:KS_H2Hrange}, indicates that the RH distance is significantly more sensitive than KS distance to the broad range of anomalies we have investigated. 
\begin{figure}
  \centering
  \hfill
  \subfloat[][Power-Law \\ Scan]{
    \includegraphics[trim = 15 28 69 85, clip, width = .4\textwidth]{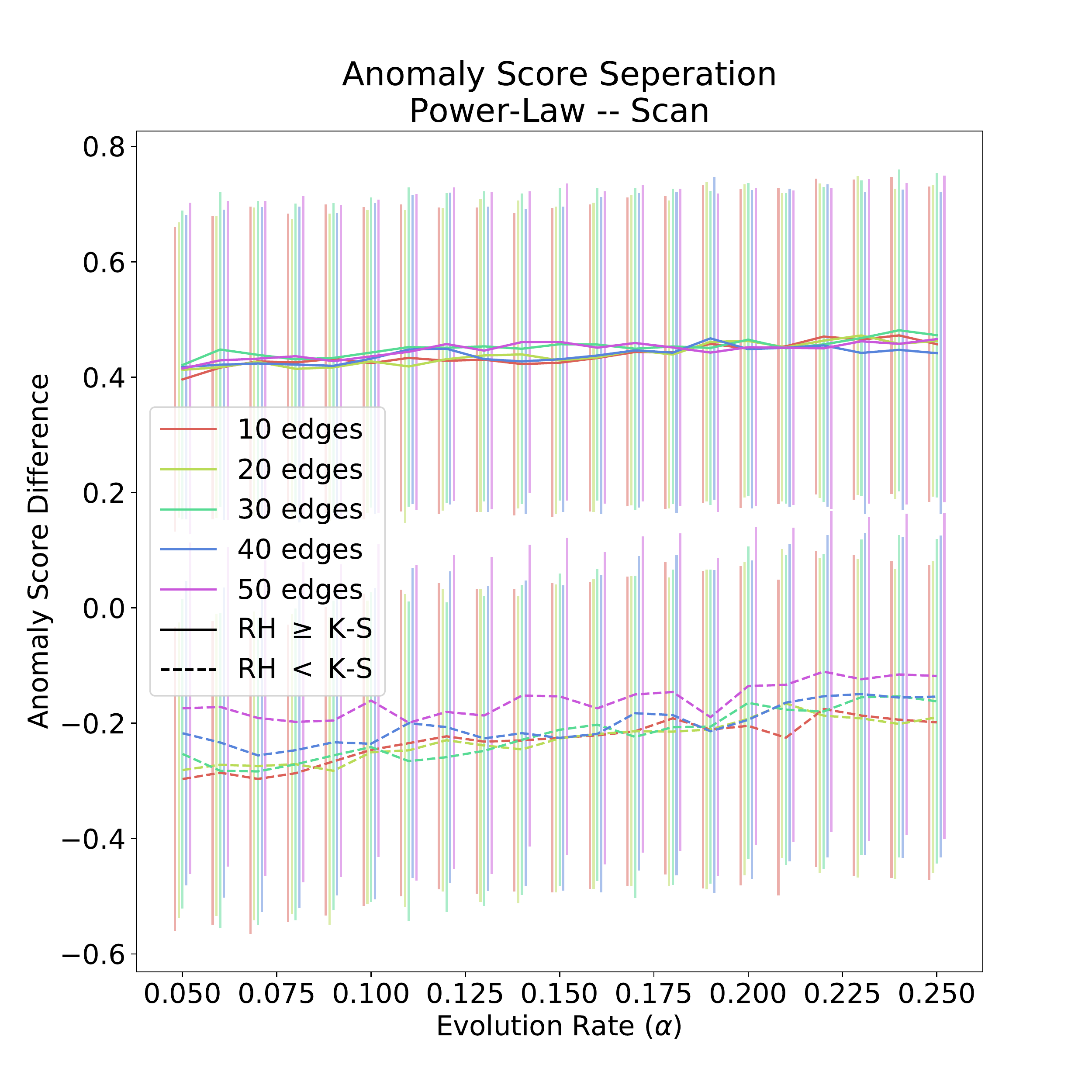}

  }
  \hfill
  \subfloat[][Bump Power-Law \\ Scan]{
    \includegraphics[trim = 15 28 69 85, clip, width = .4\textwidth]{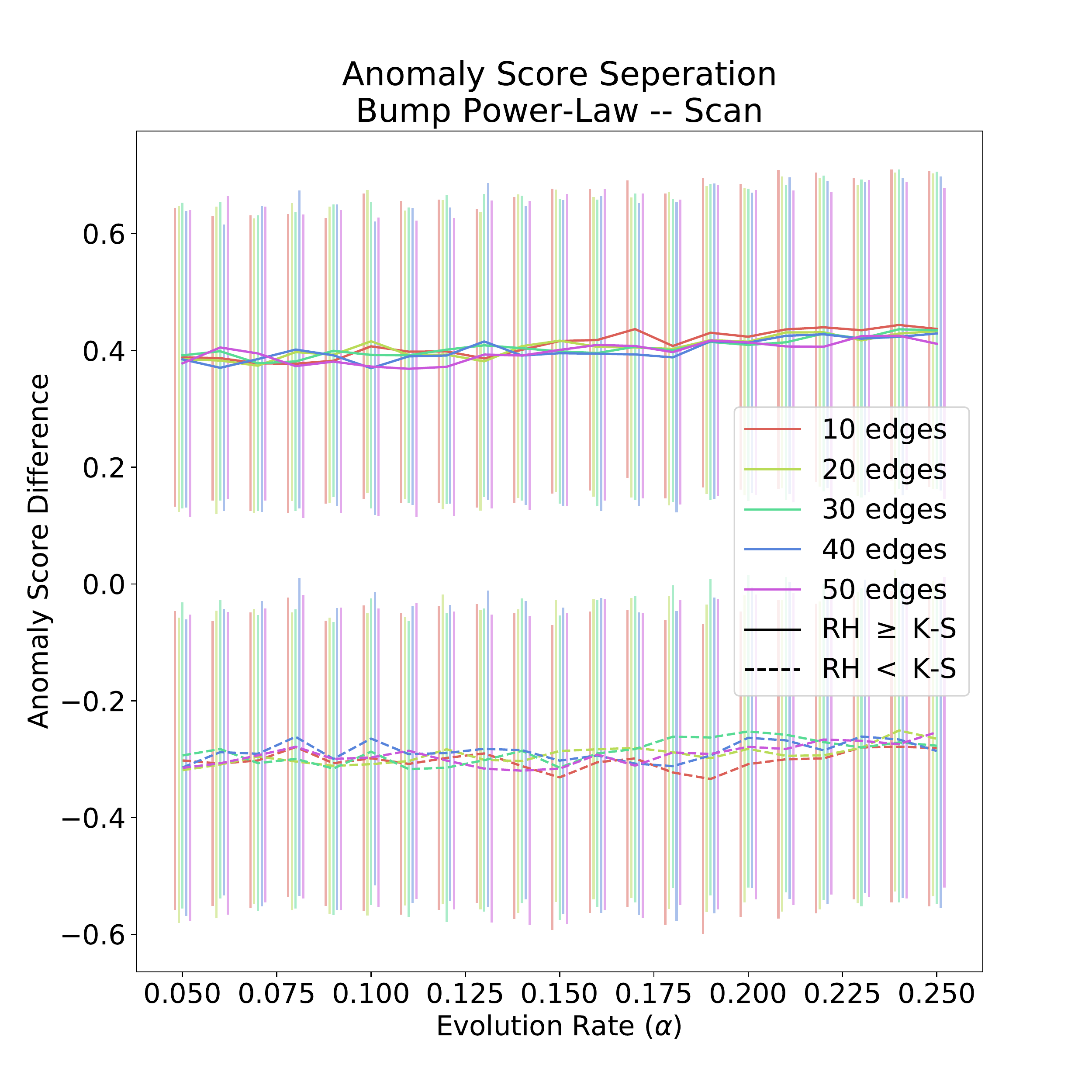}

  }
  \hfill \phantom{} \\
  \hfill
  \subfloat[][Power-Law \\ Lateral movement]{
    \includegraphics[trim = 15 28 69 85, clip, width = .4\textwidth]{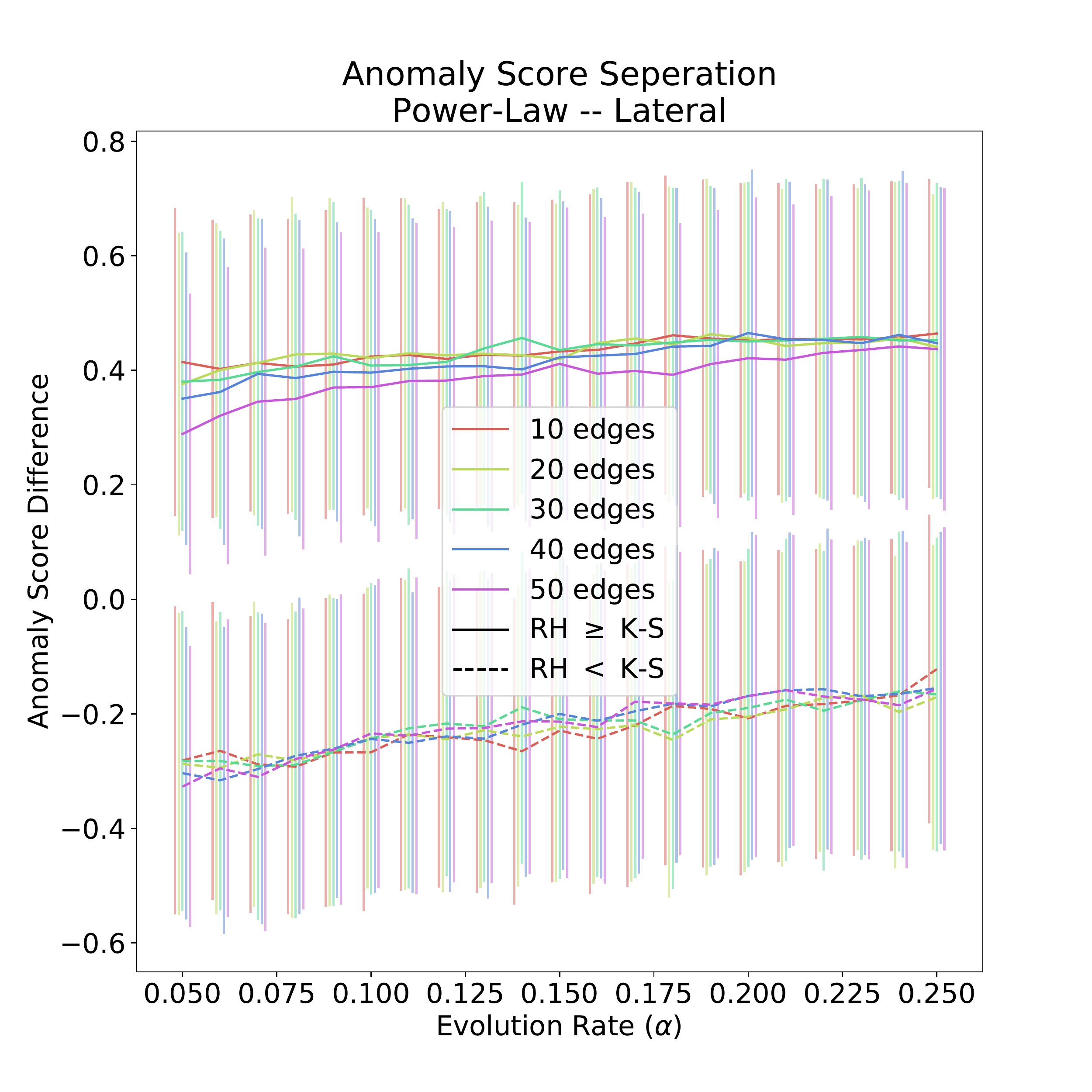}

  }
  \hfill
  \subfloat[][Bump Power-Law \\ Lateral movement]{
    \includegraphics[trim = 15 28 69 85, clip, width = .4\textwidth]{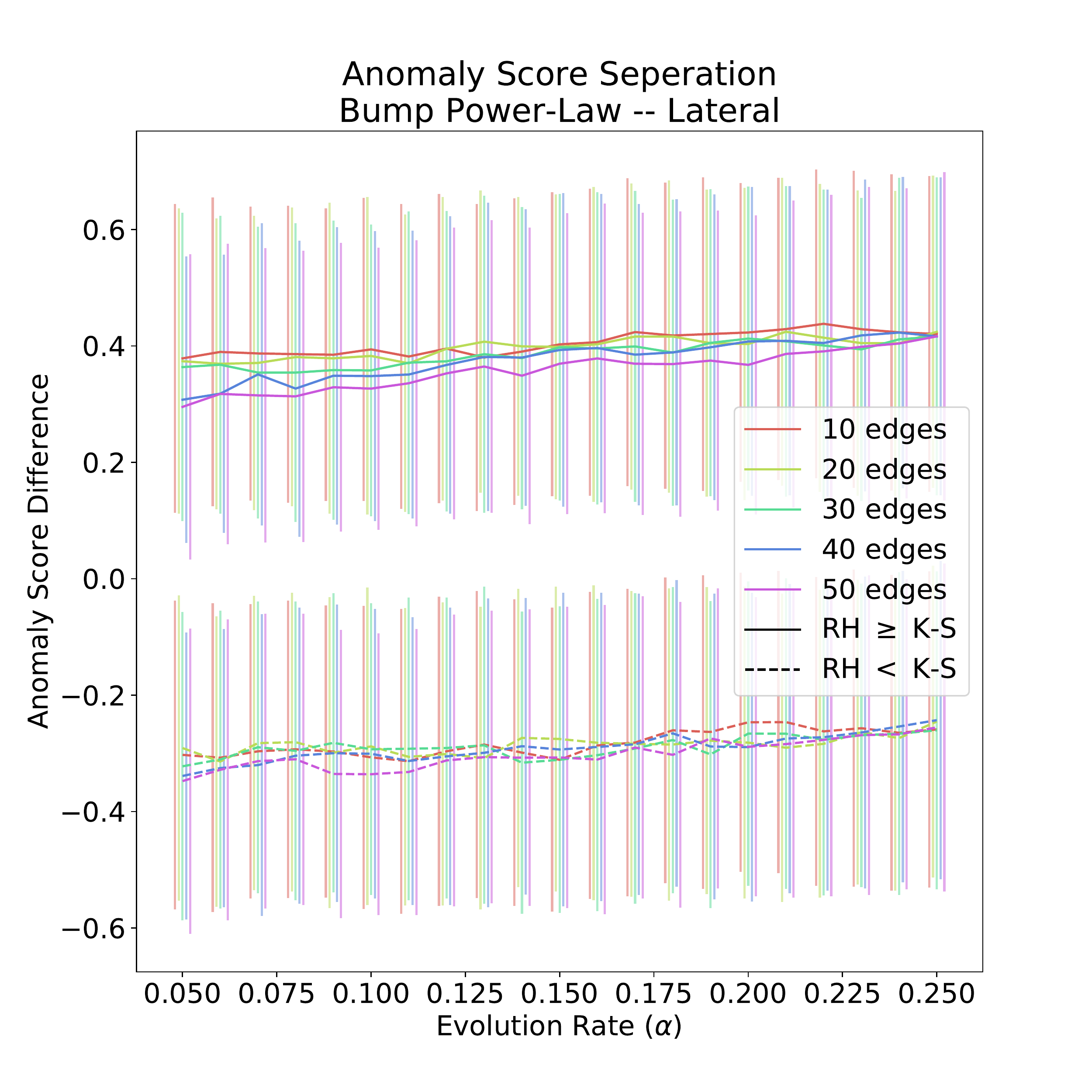}

  }
  \hfill \phantom{}
  \caption{Difference between KS anomaly scores and RH anomaly scores}\label{F:KS_H2Hrange}
  \end{figure}

\subsubsection{Graph Edit Distance}
In this section we compare the sensitivity of RH distance to a  ``perfect information'' aggregate measure, in particular graph edit distance.  Recall that the edit distance
between two graphs $F$ and $G$ is the minimum ``weight'' of a sequence
of edge/vertex additions/deletions needed to transform $F$ to $G$.  In
general this quantity is \NPcomp\ to compute~(see \cite{zeng2009comparing}) and likely
impractical to even approximate~\cite{Lin1994}.  This complexity is driven by
the difficulty in finding the optimal alignment between the vertices
of $F$ and $G$ which maximizes the edge overlap between $F$ and $G$.
For the HLM model, this problem is mitigated by the natural alignment
between the graphs generated at consecutive time steps.  Thus, for
purposes of this section we will approximate the graph edit distance
as the number of edges that ``flip'' during each evolution of the HLM
model.

The following lemma allows us to significantly simplify the calculation of the anomaly score for edit distance by approximating the baseline distribution with the large $n$ limit.
\begin{lemma}
Let $G$ be an random graph distributed according to $\mathcal{G}(P)$
and let $G'$ be the graph formed by one iteration of the
Hagberg-Lemons-Mishra evolution with evolution parameter $\alpha$ and
probability matrix $P'$.  Let $X$ be the random variable
that counts the number of edges that differ
between $G$ and $G'$.  If $\Var{X}\rightarrow \infty$, then $X$ is asymptotically normally distributed.
\end{lemma}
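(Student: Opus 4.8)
The plan is to realize $X$ as a sum of independent (though not identically distributed) Bernoulli random variables indexed by vertex pairs, and then invoke the Lyapunov central limit theorem, using the boundedness of indicators to verify the Lyapunov condition directly from the hypothesis $\Var{X}\to\infty$.

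First I would write $X = \sum_{\{u,v\}} X_{uv}$, where $X_{uv}$ is the indicator of the event that the edge $\{u,v\}$ is present in exactly one of $G$ and $G'$. The crucial structural point is that, per pair $\{u,v\}$, whether that edge flips is a deterministic function of three mutually independent coins local to the pair: the coin placing $\{u,v\}$ in $G$ (present with probability $p_{uv}$), the coin placing $\{u,v\}$ in the masking set $M$ (probability $\alpha$), and, conditioned on $\{u,v\}\in M$, the re-sampling coin (present with probability $p'_{uv}$). Since in both $\mathcal{G}(P)$ and the HLM transition these coins are drawn independently across distinct pairs, the family $\set{X_{uv}}$ is independent, and a short computation gives $q_{uv} := \prob{X_{uv}=1} = \alpha\paren{p_{uv}(1-p'_{uv}) + (1-p_{uv})p'_{uv}}$ and $\Var{X} = \sum_{\{u,v\}} q_{uv}(1-q_{uv})$, although the exact form of $q_{uv}$ will not be needed.

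Next I would apply the Lyapunov CLT (with exponent $3$) to the triangular array $\set{X_{uv} - q_{uv}}$. Because each summand lies in $[-1,1]$, we have $\expect{\abs{X_{uv}-q_{uv}}^3} \leq \expect{(X_{uv}-q_{uv})^2} = q_{uv}(1-q_{uv})$, so
\[
\frac{1}{\Var{X}^{3/2}}\sum_{\{u,v\}} \expect{\abs{X_{uv}-q_{uv}}^3} \;\leq\; \frac{1}{\Var{X}^{3/2}}\sum_{\{u,v\}} q_{uv}(1-q_{uv}) \;=\; \frac{1}{\sqrt{\Var{X}}} \;\longrightarrow\; 0
\]
whenever $\Var{X}\to\infty$. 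Thus the Lyapunov condition holds, and we conclude that $\paren{X - \expect{X}}/\sqrt{\Var{X}}$ converges in distribution to a standard normal, which is the claim.

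I do not expect a genuine obstacle here: the only step demanding care is the independence claim, which must be read off the precise form of the HLM transition kernel (a pair's next state depends only on its current state and fresh randomness confined to that pair), after which the argument is a textbook application of Lyapunov. One might additionally remark that the hypothesis $\Var{X}\to\infty$ is exactly what is required — otherwise $X$ can concentrate on finitely many values and no normal limit is possible — and that in the sparse regime of interest $\Var{X}$ does grow, being of order $\alpha\rho$ when the entries of $P$ and $P'$ are uniformly small, so the hypothesis is not vacuous.
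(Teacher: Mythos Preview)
Your proposal is correct and follows essentially the same route as the paper: decompose $X$ as a sum of independent Bernoulli indicators over vertex pairs and verify the Lyapunov condition with $\delta=1$ by bounding the third absolute central moment of each indicator by its variance, yielding the ratio $\Var{X}^{-1/2}\to 0$. The only minor differences are cosmetic---the paper expands $\expect{\abs{X_{ij}-\expect{X_{ij}}}^3}$ explicitly before bounding by $\Var{X_{ij}}$, whereas you use the one-line inequality $\abs{Y}^3\le Y^2$ for $\abs{Y}\le 1$---and your explicit verification of independence across pairs (which the paper leaves implicit) is a welcome clarification.
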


\begin{proof}
  Let $X_{ij}$ be the indicator function for the random variable that
  edge $\set{i,j}$ is present in precisely one of $G'$ and $G$ and
  observe that $X = \sum_{i < j} X_{ij}$.  We
recall that by the Lyapunov Central Limit Theorem~\cite[p. 362]{billingsley2008probability}, we have that
 \[ \frac{X - \expect{X}}{\sqrt{\Var{X}}} \overset{\mathcal{D}}{\longrightarrow} \mathcal{N}(0,1)\]
   if there is some $\delta > 0$ such that
  \[\lim_{n \rightarrow \infty} \frac{1}{\Var{X}^{\nicefrac{2+\delta}{2}}}
    \sum_{i < j} \expect{\abs{X_{ij} - \expect{X_{ij}}}^{2+\delta}} =
    0.\]
  Fixing $\delta = 1$, we note that
  \begin{align*}
\sum_{i < j} \expect{\abs{X_{ij} - \expect{X_{ij}}}^{3}} &= \sum_{i <
                                                          j}
                                                          \expect{X_{ij}}
                                                          \paren{1-\expect{X_{ij}}}^3
                                                          +
                                                          \paren{1-\expect{X_{ij}}}\expect{X_{ij}}^3
    \\
    &= \sum_{i < j}
      \expect{X_{ij}}\paren{1-\expect{X_{ij}}}\paren{\expect{X_{ij}}^2
      + \paren{1-\expect{X_{ij}}}^2}  \\
    &\leq \sum_{i < j}
      \expect{X_{ij}}\paren{1-\expect{X_{ij}}} \\
    &= \Var{X}.
  \end{align*}
  Thus, if $\Var{X} \rightarrow \infty$, then
  \[\lim_{n \rightarrow \infty} \frac{1}{\Var{X}^{\nicefrac{3}{2}}}
    \sum_{i < j} \expect{\abs{X_{ij} - \expect{X_{ij}}}^{3}} =
    0\] and $X$ is normally distributed.
\end{proof}

It is worth mentioning that this, in principle, allows for an explicit
formula for the distribution of the anomaly score for edit distance
in a wide range of baseline and anomalous behaviors, namely\[ \prob{S \leq s} = \Phi\paren{\frac{\mu - \mu_A}{\sigma_A} +
    \frac{\sigma_A}{\sigma} \Phi^{-1}\paren{\frac{s+1}{2}}} - \Phi\paren{\frac{\mu - \mu_A}{\sigma_A} -
    \frac{\sigma_A}{\sigma} \Phi^{-1}\paren{\frac{s+1}{2}}},\]
where $(\mu,\sigma)$ and $(\mu_A,\sigma_A)$ are the mean and distribution of the baseline and anomalous evolutions, respectively, and $\Phi$ is the cumulative distribution function of the standard normal distribution.  However, given the correlated nature of the anomalies the calculation of $\sigma_A$ is tedious, so we will empirically estimate this distribution.

Figure \ref{F:EditH2H} again presents the relative performance of the anomaly scores, this time for edit distance and RH distance, for all 420 anomaly trials. Again the $y$ value counts how many of the 1,000 cases RH outperforms edit distance.  We note that in the best case (bump power-law degree distribution, lateral movement anomaly, 10 edges, $\alpha = 0.24$), the RH distance anomaly score is larger than the edit distance anomaly scores 646 times.  However, for the power-law case the RH anomaly scores essentially never outperform the edit distance anomaly scores. This failure is mitigated by the fact that, as mentioned earlier, in many cases the edit distance is computationally infeasible, while the RH distance requires minimal computational overhead. It is also worth mentioning that we can see a clear degradation of performance for edit distance as the size of the anomaly decreases and the evolution rate increases.  This phenomenon can be explained by observing that the anomaly score for edit distance is driven by a $z$-score of the anomaly, which is linearly correlated with the anomaly size and inversely correlated with the standard deviation of the baseline distribution.  Additionally, the variance baseline distribution of edit distance is linear related to the evolution rate, resulting in significantly decreased sensitivity at high evolution rates.

We further compare the relative behavior of the edit distance anomaly scores and the RH distance anomaly scores, in the same way as we did for KS above, by considering the average difference between the anomaly scores in the cases where the RH anomaly score is larger (positive values) and in the case the edit distance anomaly score is larger (negative values).  As the RH distance anomaly score is essentially never larger than the edit distance anomaly score for the power-law distribution, we restrict our attention here to the bump power-law distribution.  In Figure \ref{F:edit_H2Hrange}, we again report the relative magnitude of the differences with the error bars representing an interval one standard deviation away from the mean.  Again we can see a clear stratification of the behavior with the RH anomaly scores performing better as the size of the anomaly decreases.  We also note the mild improvement in the performance of RH distance as the evolution rate increases, likely reflecting the decreased sensitivity of edit distance (due to larger variance).

Interestingly, the standard deviation is essentially constant over all choices of degree distribution, anomaly type, anomaly size, and evolution rate and is also roughly equal to the standard deviations shown in Figure \ref{F:KS_H2Hrange}.  Furthermore, the magnitude of the standard deviation is close to the minimal possible standard deviation given by the generalization of Bhatia-Davis inequality for the variance of a bounded random variable~\cite{agarwal2005survey}.  As the extremal distribution is given by point masses at the end points of the distribution, this indicates that there are three essentially distinct outcomes: the RH distance anomaly score is significantly larger than the edit distance anomaly score, the RH and edit distance anomaly scores are essentially the same, and the edit distance anomaly score is significantly larger than the RH distance anomaly score.  Furthermore, this holds regardless of the size and nature of anomaly or evolution rate and also holds when replacing edit distance with KS distance (for both degree distributions).  

\begin{figure}
  \centering
  \hfill
  \subfloat[][Bump Power-Law \\ Scan]{
    \includegraphics[trim = 15 28 69 85, clip, width = .4\textwidth]{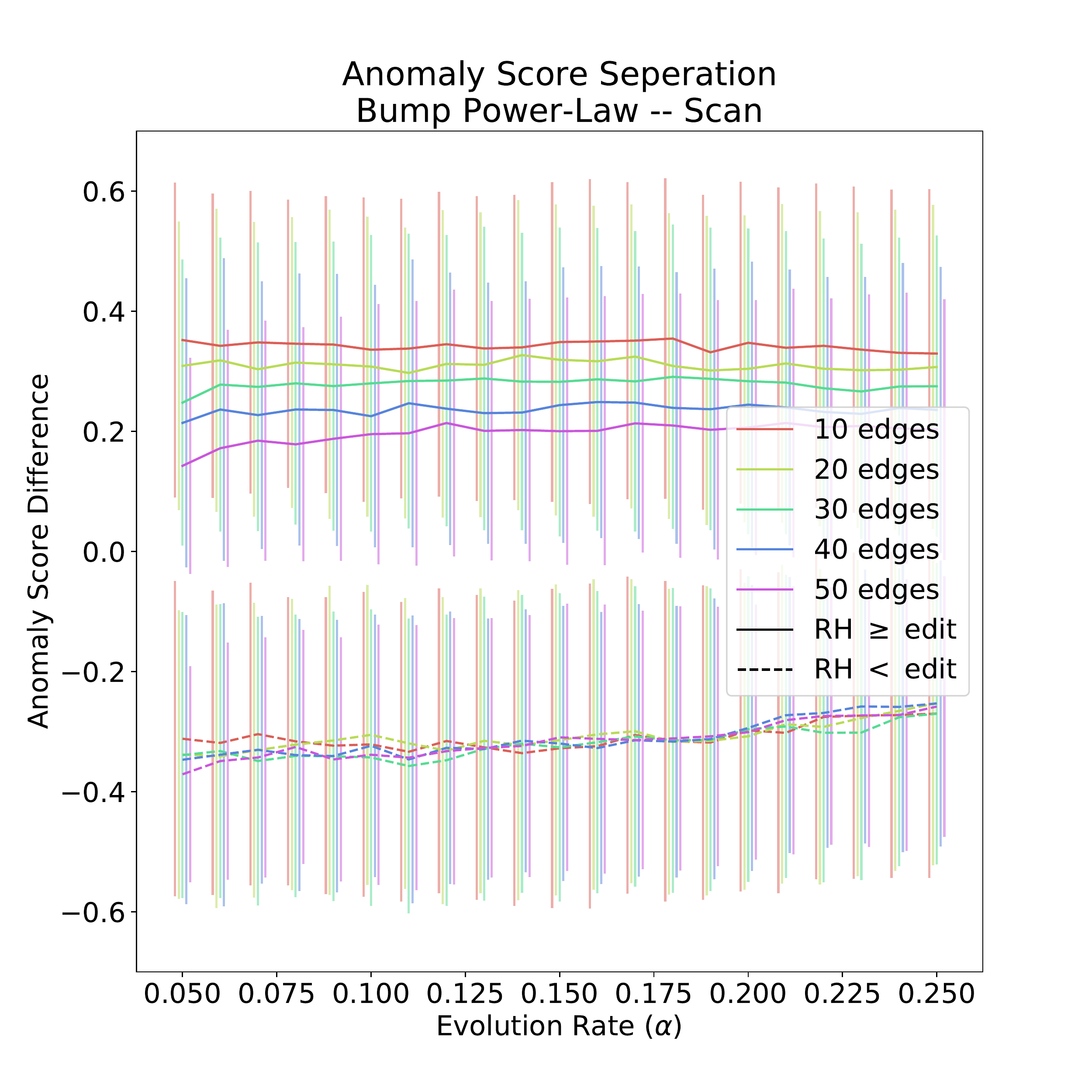}
  }
  \hfill
  \subfloat[][Bump Power-Law \\ Lateral movement]{
    \includegraphics[trim = 15 28 69 85, clip, width = .4\textwidth]{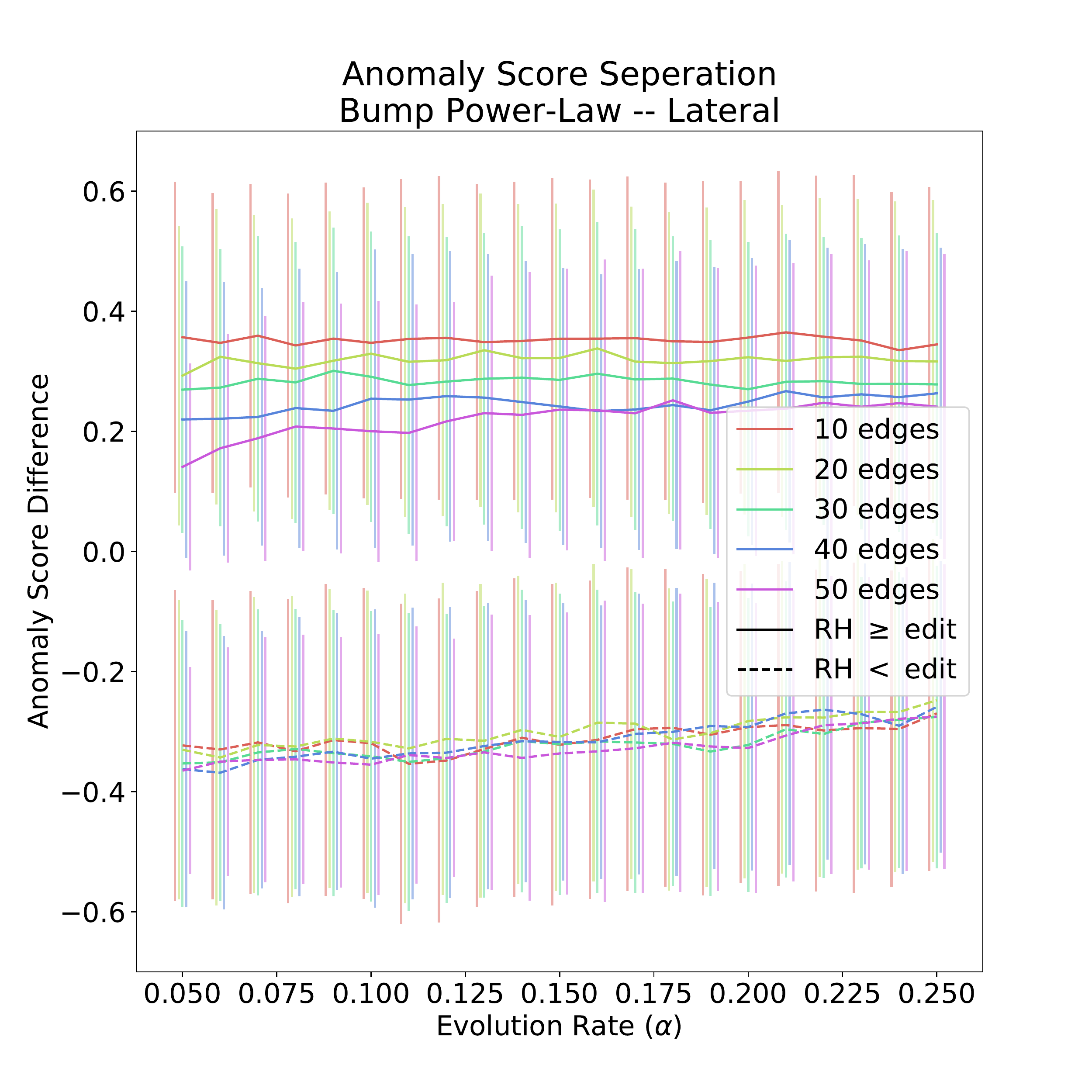}
  }
  \hfill \phantom{}
  \caption{Difference between edit distance and RH distance}\label{F:edit_H2Hrange}
  \end{figure}

\section{Conclusions}

In this work, we conducted an experimental and statistical study of Relative Hausdorff distance in the context of time-evolving sequences of graphs. 
Applying RH distance as an anomaly detection tool, we first tested its detection of red team events across multiple modalities in real cyber security data. 
We found evidence that RH distance values register statistical change-points at red team events, although these results were sensitive to window length, the granularity of pairwise RH measurements, and subject to the limitations of the data. 
In order to test RH distance in a more controlled and rigorous manner, we then turned our attention to a temporal graph model inspired by cyber-data.

Using this temporal graph model to generate synthetic sequences of evolving graphs, we experimentally tested the sensitivity of RH distance to two attack profiles. 
To broaden our tests scope, we considered a multitude of parameter settings in which we varied the input degree distribution, temporal evolution rate, and intensity of the attack signal. 
In its own right, RH distance performed respectably, yielding ROC curves above the line of no-discrimination for every scenario tested. 
Compared with other similarity measures, RH distance consistently outperformed another lightweight similarity measure based on Kolmogorov-Smirnov distance, while its performance against the computationally-intensive edit distance was more mixed: while edit distance clearly outperformed RH distance under scenarios featuring the power law degree distribution, RH distance was better able to detect the low-intensity lateral movement attack under the bump power law degree distribution. 

Anomaly detection generally, and even specifically in cyber security, is not amenable to a ``one method to rule them all'' mentality. 
Indeed, there are many types of anomalies and one does not expect them to all be caught by the same detector.
It is important to recognize that our analysis does not use all of the available information pertinent to real cyber data. 
Distilling a time interval of data down to a single graph and removing all metadata is likely to introduce many false positives.
It could be that a graph is anomalous given the recent context, but the behavior is fully expected by cyber security operations analysts (e.g. a daily backup may appear to be an exfiltration if the IP addresses involved aren't considered). 
In the other direction, if the graph does not contain the metadata that would flag an anomaly, this may similarly introduce false negatives. 
By integrating metadata into our analysis, it is possible that as anomalies are discovered, this metadata could be used to help classify them as benign and nefarious anomalies. 
Lastly, it is also worth noting that our analyses considered the entire data in a given period, as opposed to an online approach. Whether and how RH distance might be utilized in online anomaly detection frameworks remains another open topic for future research. \\

%
%

\noindent{\bf Funding} \\
\noindent PNNL is operated by Battelle for the United States Department of Energy under Contract DE-AC05-76RL01830. 
\textit{PNNL Information Release:} PNNL-SA-141621.
The design of the study and all data analysis and interpretation was done under the direction of the authors of this paper. \\


\noindent{\bf Acknowledgements}\\
The authors would like to thank our colleague Carlos Ortiz-Marrero for reviewing the manuscript.

\bibliographystyle{siam}
\bibliography{RH_refs}
\end{document}